\definecolor{dgreen}{RGB}{34, 139, 34} \definecolor{webdkgreen}{rgb}{0,0.3,0} \definecolor{blueviolet}{RGB}{138,43,226}
\definecolor{brown}{rgb}{.6,0,0} \definecolor{dblue}{rgb}{0,0,.7} \definecolor{indigo}{RGB}{50,0,105} 
\newtheorem{thm}{Theorem}[section]        \newtheorem{lemma}[thm]{Lemma}	
\newtheorem{definition}[thm]{Definition} \newtheorem{prop}[thm]{Proposition}   
\newtheorem{conj}[thm]{Conjecture}  		
\theoremstyle{plain}
\DeclareFontFamily{U}{rsf}{} \DeclareFontShape{U}{rsf}{m}{n}{  <5> <6> rsfs5 <7> <8> <9> rsfs7 <10-> rsfs10}{}
\DeclareMathAlphabet\Scr{U}{rsf}{m}{n} \DeclareMathAlphabet\mathbi{U}{cmr}{bx}{it}
\newcommand{\R}{{\mathrm R}}	\newcommand{\LL}{{\mathrm L}}
\def\CY{Calabi-Yau}
\def\roof{\mbox{\tiny \mbox{$\!\vee$}}}	\def\comp{\mbox{\scriptsize \mbox{$\,\circ \,$}}}
\def\O{\mathcal{O}} \def\c#1{\mathcal{#1}}	
\def\C{{\mathbb C}}\def\P{{\mathbb P}} \def\F{{\mathbb F}}
\def\Z{{\mathbb Z}}
\def\D{\mathbf{D}}
\def\iso{\cong} 
\def\H{\operatorname{H}}
\def\id{\operatorname{id}}
\def\Hom{\operatorname{Hom}} \def\sHom{\operatorname{\Scr{H}\!\!\textit{om}}}	
\def\Ext{\operatorname{Ext}}      		
	\def\Aut{\operatorname{Aut}}
\def\ch{\operatorname{\mathrm{ch}}}
\def\Ltensor{\mathbin{\overset{\mbox{\tiny \mbox{$\mathbf L$}} }\otimes}}
\def\ms#1{\mathsf{#1}}		\def\cal{\mathcal}
\def\Cone#1{\operatorname{Cone}\left( #1 \right)}
\def\ses#1#2#3{\xymatrix@1{0 \ar[r] & #1 \ar[r] & #2 \ar[r] & #3 \ar[r] & 0}}
\newcommand{\T}[1]{{\mathsf T}_{#1}} \def\L#1{{\mathsf L}_{#1}}
\newcommand{\mono}{\hookrightarrow} 
\newcommand{\s}[1]{\mathcal{#1}} 
\begin{document}

\title{\bf Quantum symmetries and exceptional collections	\\[10mm]}
\author{{\bf Robert L.~Karp}\thanks{rlk at vt.edu	}	\\[2mm]
\normalsize  Department of Physics,  Virginia Tech\\
\normalsize Blacksburg, VA 24061 USA				}
\date{}
\maketitle

\vskip 1cm

\begin{abstract}
We study the interplay between  discrete quantum symmetries at certain points in the moduli space of \CY\ compactifications, and the associated identities that the geometric realization of D-brane monodromies must satisfy. We show that in a wide class of examples, both local and compact, the monodromy identities in question always follow from a single mathematical statement. One of the simplest examples is the $\Z_5$ symmetry at the Gepner point of the quintic, and the associated D-brane monodromy identity.
\end{abstract}

\vfil\break
\tableofcontents

\section{ Introduction}    \label{s:intro}

Studying B-type topological D-Branes using the derived category allows us to go beyond the picture of D-branes as vector bundles over submanifolds, and opens the window toward understanding various $\alpha'$-corrections. At the same time this technology is very efficient at studying certain problems, like the superpotential \cite{Aspinwall:2004bs,en:Ema}, which seem hard by traditional boundary conformal field theory (CFT) techniques.

{From} the point of view of {\em strings} in string theory, the
appearance of the derived category is  intriguing, but D-branes mandate the categorical approach \cite{Douglas:2000gi, Sharpe:1999qz}. In
particular, B-type topological D-branes are
objects in the bounded derived category of coherent sheaves.  The A-type D-branes have a very different
description, involving the derived Fukaya category. Mirror symmetry
exchanges the A and B branes, and naturally leads to Kontsevich's {\em
homological mirror symmetry} (HMS) conjecture.  For a detailed
exposition of these ideas we refer the reader to the recent book
\cite{DBook}, or the review articles \cite{Paul:TASI2003,Sharpe:2006vd}.

The fact that B-type D-branes undergo monodromy as one moves in the moduli space of complexified Kahler forms is  expressed quite naturally in this language. This is in fact a surprisingly rich area, where the interplay between abstract mathematics (autoequivalences of derived categories) and string theory (discrete symmetries in CFT's) is particularly evident. The main motivation of the present paper is to further our understanding in this area.

To motivate our result we need to start with mirror symmetry in its pre-HMS
phase. In this form mirror symmetry is an isomorphism between the
(complexified) Kahler moduli space $\c M_K(X)$ of a \CY\ variety $X$ and the
moduli space of complex deformations $\c M_c(\widetilde{X})$ of its mirror
$\widetilde{X}$.
For the precise definitions we refer to the book by Cox and
Katz \cite{Cox:Katz}.

The complexified Kahler moduli
space $\c M_K(X)$ is an intricate object, but for $X$ a
hypersurface in a toric variety it has a rich combinatorial structure
and is relatively well-understood.  In particular, the fundamental
group of $\c M_K(X)$ in general is non-trivial, and one can talk about
various monodromy representations. More concretely, there are two
types of boundary divisors in $\c M_K(X)$: ``large radius divisor''
and the ``discriminant''. Both  are reducible
in general.

At a large radius divisor certain cycles of $X$ (or $X$ itself), viewed
as a Kahler manifold, acquire infinite volume. The discriminant is
somewhat harder to describe.  The original definition is that the CFT
associated to a string probing $X$ becomes singular at such a point in
moduli space.  Generically this happens because some D-brane (or
several of them, even infinitely many) becomes massless, and therefore
the effective CFT description provided by the strings fails. A
consequence of this fact is that, by using the mirror map isomorphism
of the moduli spaces, as one approaches the discriminant in $\c
M_K(X)$ one is moving in $\c M_c(\widetilde{X})$ to a point where the
mirror $\widetilde{X}$ is developing a singularity.

Armed with this picture of $\c M_K(X)$, we can fix a basepoint $O$,
and look at loops in $\c M_K(X)$ based at $O$.  Traversing such  loops the D-branes will undergo monodromies, similarly to the BPS particles in Seiberg-Witten theory, and for topological B-branes this leads to
non-trivial functors $\D(X)\to \D(X)$,\footnote{$\D(X)$ will always denote the {\em bounded} derived category of the the variety (or smooth stack) $X$.} which are in fact
equivalences.  Therefore we arrive at a group
homomorphism, the monodromy representation, first suggested by Kontsevich:\footnote{Kontsevich's
ideas were generalized by Horja and Morrison. We refer to \cite{en:Horja} for more details on the history of this topic.}
\begin{equation*}
 \mu \colon \pi_1(\c M_K(X)) \longrightarrow \Aut(\D(X)).
\end{equation*}
At present writing very little is known about $\mu$. The question at hand is: given a pointed loop
in $\c M_K(X)$, what is the associated autoequivalence in $ \D(X)$?
Progress in this direction was made in \cite{en:Horja}, 
where this question is answered for the EZ-degenerations 
introduced in \cite{Horj:EZ}. 

It is clear now that given a presentation of $\pi_1(\c M_K(X))$ where we know the images under $\mu$ of the generators,
the relations in the presentation will determine interesting identities 
in $\Aut(\D(X))$. In particular, whenever one is at a point in moduli space which is an orbifold of some sort (like a Landau-Ginzburg orbifold), the moduli space locally is an orbifold itself. The fact that moduli spaces are in general stacks rather than varieties, precisely because of the appearance of additional automorphisms at different points, complicates matters a bit, as we will see in the example of the next paragraph. But it is clear that there are loops encircling the orbifold point in moduli space which are finite order. Therefore the associated monodromy operator in $\Aut(\D(X))$ has to satisfy an analogous relation. Understanding these relations in $\Aut(\D(X))$ is the goal of this paper. We will find that in a broad range of examples of ``toric'' \CY\ varieties, both local and compact, finite orderness always follows from a general statements concerning Seidel-Thomas twist functors and complete exceptional collections (Prop.~\ref{pullback} and Prop.~\ref{pushout} are two special cases).

For illustration, let us look at the example of the quintic 3-fold in $\P^4$. In this case the compactification $\overline{\c M}_K(X)$ of the Kahler moduli space $\c M_K(X)$ is isomorphic to $\P^1$. $\c M_K(X)$ is also isomorphic to $\c M_c(\widetilde{X})$, the complex structure moduli space of the mirror. In either of these moduli spaces we have three distinguished points:
\begin{enumerate}
 \item $P_{LV}$ is  the large volume limit point in $\c M_K(X)$. It also corresponds to the large complex structure limit point (with maximally unipotent monodromy) in $\c M_c(\widetilde{X})$.
\item $P_0$ is the conifold point. Here the D6-brane wrapping $X$ becomes massless, and therefore the effective CFT description brakes down. Alternatively, the mirror   family $\widetilde{X}$ develops rational double points, in physics language conifolds, and is singular.
\item $P_{LG}$ is the Gepner point, and is a Landau-Ginzburg (LG) orbifold. At this point in moduli space the mirror $\widetilde{X}$ has the Fermat form, and has an additional $\Z_5$ automorphisms. In both formulations  we see that at this point the moduli space has a stacky $\Z_5$ structure.\footnote{Mathematically the compactified moduli space $\overline{\c M}_K(X) \cong \overline{\c M}_c (\widetilde{X})\cong\P^1$ is only a coarse moduli space, while the moduli stack is $\P^1(5,1)$.
As a scheme $\P^1(5,1)\cong \P^1$, but not as a stack.}
\end{enumerate}

Let $M_P$ denote the monodromy associated to a loop around the point
$P$. Since $P_{LV}$ and $P_0$ are the only limit points of $\c
M_K(X)$, and the compactification of this is isomorphic to $\P^1$ (see
\cite{Cox:Katz}), with $\pi_1(\P^1- \{\mbox{2 points} \})=\Z$, one
would want to conclude, incorrectly, that $M_{P_{LV}}$ and $M_{P_0}$
are related.  But as we discussed,
$P_{LG}$ is a stacky point in the moduli space, with finite stabilizer,
and so, at best, the $5$-th power of $M_{P_{LG}}\iso M_{P_{LV}}\comp
M_{P_0}$ is the identity.  This was proposed by
Kontsevich, who checked it in K-theory.  Later  Aspinwall \cite{Paul:TASI2003} realized that in fact
\begin{equation}\label{as1}
 M_{P_{LG}}^5\iso (-)[2].
\end{equation}

The authors of \cite{en:Alberto} observed that the exceptional collection
\begin{equation*}
 \O_{\P^4},\O_{\P^4}(1),\ldots , \O_{\P^4}(4)
\end{equation*}
and its dual collection $\{\Omega_{\P^4}^k(k)\}_{k=0}^4$ ($\Omega_{\P^4}^k$ is the $k$th wedge power of the holomorphic cotangent bundle) were implicit in Aspinwall's proof, and this was the aspect of the proof that gave a handle for generalizations. This observation allowed \cite{en:Alberto} to show that (\ref{as1}) generalizes to \CY\ hypersurfaces in weighted projective spaces of arbitrary dimensions. More precisely, for the generic \CY\ hypersurface $X$ in weighted projective space $\P^n_{w_0\ldots w_n}$, one has that
$\left( M_{P_{LG}} \right)^{\sum w_i}\iso (-)[2]$,
where $M_{P_{LG}}=\ms{T}_{\O_X} \comp \ms{L}_{\O_X(1)}$ in the notation of Sec.~\ref{s:mg}.
The proof constructs a Beilinson's resolution of the diagonal for  $\P^n_{w_0\ldots w_n}$ using the full exceptional collection
$$
\O,\O(1),\ldots , \O(\mbox{$\sum$} w_i-1)
$$
and its dual. Canonaco generalized this approach \cite{AC}, and shows that given a  full exceptional collection on a smooth stack, it leads to a Beilinson type resolution. This resolution is then used to prove Prop.~\ref{pullback} and Prop.~\ref{pushout}.

The aim of this paper is to understand the identities stemming from quantum symmetries in the case when $\c M_K(X)$ is higher dimensional. The recurring feature in our investigations will be the existence of  exceptional collections, either on  divisors in the \CY , or in the ambient space.

Exceptional collections have appeared in the physics literature before \cite{Zaslow:1994nk,Hori:2000ck}.
They were  first applied in the context of Landau-Ginzburg models in \cite{Govindarajan:2000vi,Tomasiello:2000ym,Mayr:2000as}.
They reappeared in the AdS/CFT literature \cite{Cachazo:2001sg,Wijnholt:2002qz} in the context of quiver gauge theories \cite{Douglas:Moore}. Their role in determining the gauge theory living on D-branes placed at \CY\ singularities was clarified in \cite{Herzog:2005sy,en:Chris2}. In this paper we show that they can also be used to establish the monodromy identities.

The organization of the paper is as follows. In Section~2 we briefly review some of the Fourier-Mukai technology, which is used throughout the paper. In Section~3 we consider three very different examples with two dimensional moduli space, and show that in every instance, the monodromy identities that follow from the emergence of additional cyclic symmetries in moduli space {\em always} follow from  Prop.~\ref{pullback} and Prop.~\ref{pushout}. We conclude the paper with a discussion of how general our results are, and an outlook to possible generalizations.

\section{ Monodromies as autoequivalences}    \label{s:m}

We start this section with a brief review of Fourier-Mukai functors. Then we  express the various  monodromy actions on D-branes in terms of Fourier-Mukai equivalences.

\subsection{Fourier-Mukai functors}\label{s:fmf1}

For the convenience of the reader we review some of the key notions concerning Fourier-Mukai functors, and at same time specify the conventions used. We will make extensive use of this technology in the rest of the paper.  Our notation follows \cite{en:fracC3}.  

Given two non-singular proper algebraic varieties (or smooth Deligne-Mumford stacks), $X_1$ and $X_2$, an object ${\cal K} \in \D(X_1\! \times\! X_2)$ determines a functor of triangulated categories $\Phi_{\cal K}\!: \D(X_1) \to \D(X_2)$ by the formula\footnote{
$\R p_{2*}$ is the total right derived functor of $p_{2*}$, i.e., it is an exact functor from $\D(X)$ to $\D(X)$. Similarly, $\Ltensor$ is the total left derived functor of $\otimes$. Most of the time these decorations will be omitted.}
\begin{equation*}
\Phi_{\cal K}(A):=\R p_{2*} \big(\,{\cal K} \Ltensor p_1^*(A)\,\big)\,,
\end{equation*}
where $p_i\!: X\! \times\! X \to X$ is projection to the $i$th factor:
\begin{equation*}
\xymatrix{
  &X_1 \!\times\! X_2\ar[dl]_{p_1}\ar[dr]^{p_2}&\\
  X_1 & & X_2\,.}
\end{equation*}
The object ${\cal K} \in \D(X_1\!\times\! X_2)$ is called the {\bf kernel} of the Fourier-Mukai functor $\Phi_{\cal K}$.

It is convenient to introduce the {\bf external tensor product} of two objects $A\in\D(X_1)$ and $B\in\D(X_2)$ by the formula
\begin{equation*}
A\boxtimes B=p_2^*A\Ltensor p_1^*B\,.
\end{equation*}

The importance of Fourier-Mukai functors when dealing with derived categories stems from the following theorem of Orlov (Theorem 2.18 in \cite{Orlov:96}), later generalized for smooth quotient stacks associated to normal projective varieties \cite{Kawamata:DC})
\begin{thm} \label{thm:orlov}
Let $X_1$ and $X_2$ be smooth projective varieties.
Suppose that $\mathsf{F}\!: \D(X_1)\to\D(X_2)$ is an equivalence of triangulated categories. Then there exists an object $\c K\in \D(X_1\!\times \! X_2)$, unique up to isomorphism, such that the functors $\mathsf{F}$ and $\Phi_{\c K}$ are isomorphic.
\end{thm}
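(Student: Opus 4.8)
The plan is to follow the now-standard (Orlov) strategy: realize the kernel as the structure sheaf $\O_\Delta$ of the diagonal in $X_1\times X_1$ ``transported by $\mathsf F$ in families'', then pin it down and prove uniqueness by probing with skyscraper sheaves. Since $\mathsf F$ is an equivalence it automatically admits both adjoints and is fully faithful; these are the only features of $\mathsf F$ the argument uses, so it in fact applies to any fully faithful $\mathsf F$ with a right adjoint.

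To construct the kernel, fix a very ample $\O_{X_1}(1)$ and the induced closed embedding $X_1\hookrightarrow\P^N$. Via a Beilinson-type resolution (restrict the Beilinson spectral sequence of $\P^N$ to $X_1$ and simplify) one realizes $\O_\Delta$ as a finite convolution of elementary kernels $\O_{X_1}(-i)\boxtimes\c E_i$ with $\c E_i\in\D(X_1)$ placed on the second factor, $0\le i\le N$. The heuristic kernel ``$(\id\boxtimes\mathsf F)(\O_\Delta)$'' is then made precise by replacing each $\c E_i$ with $\mathsf F(\c E_i)\in\D(X_2)$ and reassembling along the same connecting morphisms and Postnikov data, producing a candidate $\c K\in\D(X_1\times X_2)$. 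Formally $\Phi_{\c K}(A)$ is the convolution of the $\R\Gamma(X_1,A(-i))\otimes\mathsf F(\c E_i)$, which ``is'' $\mathsf F$ applied to the convolution of the $\R\Gamma(X_1,A(-i))\otimes\c E_i$, i.e.\ $\mathsf F(A)$.

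Turning this formal match into an honest isomorphism is the real content, since a triangulated functor need not commute with a long convolution absent extra coherence. I would construct a morphism of functors $\eta\colon\Phi_{\c K}\to\mathsf F$ from the diagonal presentation together with the adjunction morphisms for $\mathsf F$, and then exploit that the skyscrapers $\{k(x)\}_{x\in X_1}$ form a spanning class of $\D(X_1)$: it suffices to check $\eta_{k(x)}$ is an isomorphism for each closed point $x$ --- this is where full faithfulness, in the guise $\Hom(\mathsf F A,\mathsf F B)\iso\Hom(A,B)$, is used --- after which a base-change and support argument forces the cone of $\eta$, a bounded object acyclic after derived restriction to every fibre $\{x\}\times X_2$, to vanish. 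For uniqueness, if $\Phi_{\c K}\iso\Phi_{\c L}$ one passes to the parametrized Fourier-Mukai functors $\D(X_1\times X_1)\to\D(X_1\times X_2)$ with kernels $p_{23}^*\c K$ and $p_{23}^*\c L$ on $X_1\times X_1\times X_2$; each, evaluated on $\O_{\Delta_{X_1}}$, returns $\c K$, resp.\ $\c L$, so it remains to see that the given isomorphism of functors extends to the parametrized versions.

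The genuine obstacle is exactly this passage through the diagonal: $\mathsf F$ lives only on $\D(X_1)$, so ``$\mathsf F$ applied to $\O_\Delta$'' has no literal meaning and must be laundered through a resolution into elementary kernels while controlling boundedness and the higher coherences of the convolution --- equivalently, the work lies in constructing $\eta$ and checking it is an isomorphism on skyscrapers, after which the spanning-class argument and the uniqueness step are comparatively formal. (In modern language this difficulty can be recast as, and partly bypassed by, the uniqueness of DG/stable enhancements of the $\D(X_i)$; Orlov's original argument instead uses only the resolution of the diagonal and the point-like objects $\mathsf F(k(x))$.)
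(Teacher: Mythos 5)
You should note at the outset that the paper contains no proof of Theorem~\ref{thm:orlov}: it is quoted as Orlov's representability theorem (Theorem 2.18 of \cite{Orlov:96}, extended to quotient stacks in \cite{Kawamata:DC}) and used as a black box. So the only meaningful comparison is with Orlov's own argument, whose general outline your sketch follows; judged as a proof, however, the two steps you defer are exactly the substance of the theorem, and as formulated they would not go through.

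First, the comparison step. The construction half (Beilinson-type resolution of $\O_\Delta$ attached to a very ample $\O_{X_1}(1)$, replacing the second-factor objects $\c E_i$ by $\mathsf{F}(\c E_i)$, reassembling by a Postnikov convolution) is indeed Orlov's, though even there the ``coherence'' is not a formality: the convolution of the transported complex exists and is unique only after one arranges vanishing of the relevant negative $\Hom$'s between its terms, which is done by twisting by sufficiently ample powers. The genuine gap is your proposed natural transformation $\eta\colon\Phi_{\c K}\to\mathsf{F}$ tested on skyscrapers. The cone-and-support argument you invoke makes sense only for a morphism of \emph{kernels}, and $\mathsf{F}$ has no kernel yet --- that is what is being proved --- so there is no $\eta$ at the kernel level to take a cone of, and producing a functorial $\eta$ ``from the adjunction morphisms'' is not a routine step. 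Orlov's proof does not do this: it shows that $\Phi_{\c K}$ and $\mathsf{F}$ agree, compatibly with the structure maps, on the ample sequence $\{\O_{X_1}(i)\}$, and then applies the lemma that a fully faithful exact functor admitting adjoints is determined up to isomorphism by its restriction to an ample sequence (this is the mechanism presented, e.g., in \cite{Huybrechts} in the proof of Orlov's theorem). Your point-probing replaces the actual engine of the proof by a step that is unavailable as stated.

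Second, uniqueness. The passage you flag --- ``it remains to see that the given isomorphism of functors extends to the parametrized versions'' with kernels $p_{23}^*\c K$ and $p_{23}^*\c L$ --- is not a finishing touch but the whole difficulty: an isomorphism of triangulated functors carries no base-change data, and such an extension is essentially enhancement-level information. Moreover, your uniqueness argument never uses full faithfulness of $\mathsf{F}$, so if it worked it would show that \emph{any} two kernels inducing isomorphic Fourier-Mukai functors are isomorphic; this is known to fail for general exact functors (non-uniqueness examples of Canonaco--Stellari), so the missing extension cannot exist in that generality. Orlov's uniqueness proof uses full faithfulness in an essential way, again through the ample sequence: the kernel is reconstructed, up to isomorphism, from the objects $\Phi_{\c K}(\O_{X_1}(-i))\iso\mathsf{F}(\O_{X_1}(-i))$ together with the maps between them, rather than by a formal parametrized base-change trick.
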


The first question to ask is how to compose Fourier-Mukai (FM) functors. Accordingly, let $X_1$ $X_2$ and $X_3$ be three non-singular varieties, while let ${\cal F} \in \D(X_1\! \times\! X_2)$ and ${\cal G} \in \D(X_2\! \times\! X_3 )$ be two kernels. Let $p_{i j}\colon X_1\! \times\! X_2\! \times\! X_3\to X_i\! \times\! X_j$ be the projection map. A well-known fact is the following:
\begin{prop}\label{prop1}
The composition of the functors $\Phi_{\cal F}$ and $\Phi_{\cal G}$  is given by the formula
\begin{equation*}
\Phi_{\cal G}\comp \Phi_{\cal F} \simeq\Phi_{\cal H}\,,\quad {\rm where}\quad
{\cal H}=\R p_{13*} \big(\, p_{23}^* ( {\cal G})\Ltensor  p_{12}^* ({\cal F})\big)\,.
\end{equation*}
\end{prop}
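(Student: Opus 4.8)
The plan is to prove the composition formula for Fourier--Mukai functors by unwinding both sides and applying base change and the projection formula. First I would set up the relevant diagram: on the triple product $X_1\times X_2\times X_3$ we have the three partial projections $p_{12},p_{23},p_{13}$, and I also need the individual projections from the double products $X_i\times X_j$ to their factors, say $q_1,q_2 : X_1\times X_2\to X_1,X_2$ and $r_2,r_3 : X_2\times X_3\to X_2, X_3$, together with $\pi_i : X_1\times X_2\times X_3\to X_i$. The key combinatorial observation is that these projections are compatible: $\pi_1 = q_1\comp p_{12}$, $\pi_3 = r_3\comp p_{23}$, $\pi_2 = q_2\comp p_{12} = r_2\comp p_{23}$, and the relevant squares (e.g.\ the one with corners $X_1\times X_2\times X_3$, $X_1\times X_2$, $X_2\times X_3$, $X_2$) are Cartesian, with the projections being flat (indeed projections from products), so flat base change applies with no derived-category subtleties beyond the usual ones.

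Next I would compute $\Phi_{\cal G}(\Phi_{\cal F}(A))$ for an arbitrary $A\in\D(X_1)$ directly from the definition. Expanding the inner functor gives $\Phi_{\cal F}(A) = \R q_{2*}({\cal F}\lotimes q_1^* A)$, and then $\Phi_{\cal G}$ of that is $\R r_{3*}\big({\cal G}\lotimes r_2^*\R q_{2*}({\cal F}\lotimes q_1^*A)\big)$. The first real step is to push the pullback $r_2^*$ past $\R q_{2*}$ using flat base change along the Cartesian square over $X_2$: this turns $r_2^*\R q_{2*}({\cal F}\lotimes q_1^*A)$ into $\R p_{23*}\,p_{12}^*({\cal F}\lotimes q_1^*A) = \R p_{23*}\big(p_{12}^*{\cal F}\lotimes \pi_1^* A\big)$, using that pullback is monoidal and $q_1\comp p_{12} = \pi_1$. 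Substituting back, the composite becomes $\R r_{3*}\big({\cal G}\lotimes \R p_{23*}(p_{12}^*{\cal F}\lotimes \pi_1^*A)\big)$.

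Now I would apply the projection formula to move ${\cal G}$ inside $\R p_{23*}$: since ${\cal G} = r_2^*\cdots$ is not literally a pullback, the correct move is instead to write ${\cal G}\lotimes \R p_{23*}(-) \simeq \R p_{23*}\big(p_{23}^*{\cal G}\lotimes (-)\big)$ by the projection formula for $p_{23}$. This yields $\R r_{3*}\R p_{23*}\big(p_{23}^*{\cal G}\lotimes p_{12}^*{\cal F}\lotimes \pi_1^*A\big)$. Finally, composing the pushforwards via $r_3\comp p_{23} = \pi_3$ and then recognizing $\pi_3 = r_{13}'\comp p_{13}$ where $r_{13}' : X_1\times X_3\to X_3$ — more precisely factoring $\R\pi_{3*} = \R(\text{proj}_{X_3})_*\comp \R p_{13*}$ and pulling $\pi_1^*A = (\text{proj}_{X_1}\comp p_{13})^*A = p_{13}^*(\text{proj}_{X_1}^*A)$ out by the projection formula for $p_{13}$ — gives exactly $\Phi_{\cal H}(A)$ with ${\cal H} = \R p_{13*}(p_{23}^*{\cal G}\lotimes p_{12}^*{\cal F})$.

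The main obstacle, and the step deserving the most care, is the flat base change isomorphism $r_2^*\R q_{2*}\simeq \R p_{23*}p_{12}^*$: one must verify the square is genuinely Cartesian and that flatness (here automatic since all maps are projections from products over the relevant base) licenses base change at the level of derived categories without higher $\operatorname{Tor}$ corrections. The bookkeeping of which projection equals which composite is routine but error-prone, so I would organize it by fixing the diagram once and checking each identity $\pi_i = (\cdots)\comp p_{jk}$ explicitly before starting the computation. Everything else is a mechanical chain of the projection formula and functoriality of pushforward, and naturality in $A$ upgrades the pointwise isomorphism to an isomorphism of functors.
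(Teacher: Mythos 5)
Your proof is correct, and it is the standard argument: flat base change along the Cartesian square with corners $X_1\!\times\! X_2\!\times\! X_3$, $X_1\!\times\! X_2$, $X_2\!\times\! X_3$, $X_2$, followed by the projection formula for $p_{23}$ and for $p_{13}$. The paper itself offers no proof of Prop.~\ref{prop1} --- it is quoted as a well-known fact --- and your argument is exactly the one given in the standard references (e.g.\ Huybrechts), so there is nothing to flag beyond noting that your careful attention to the Cartesian/flatness hypotheses and to naturality in $A$ is precisely where the content of the ``well-known'' statement lies.
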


Prop.~\ref{prop1} shows that  composing two FM functors gives another FM functor, with a simple kernel.

Now we have all the technical tools ready to study the monodromy actions of physical interest.

\subsection{Monodromies in general}\label{s:mg}

As discussed in the introduction, the moduli space of CFT's contains the moduli space of Ricci-flat Kahler metrics. This, in turn, at least locally has a product structure, with the moduli space of Kahler forms being one of the factors. This is the moduli space of interest to us.
This space is a priori non-compact, and its compactification consists of two different types of boundary divisors. First we have the {\em large volume} divisors.  These correspond to certain cycles being given infinite volume. The second type of boundary divisors are the irreducible components of the {\em discriminant}. In this case the CFT becomes singular. Generically this happens because some D-brane (or several of them, even infinitely many) becomes massless at that point, and therefore the effective CFT description breaks down. For the quintic this breakdown happens at the well known conifold point.

The monodromy actions around the above divisors are  understood to some extent. An extensive treatment of monodromies in terms of Fourier-Mukai functors was given in \cite{en:Horja}. We will review now what is known.

Large volume monodromies are shifts in the $B$ field: ``$B\mapsto B+1$''. If the Kahler cone is higher dimensional, then we need to be more precise, and specify a two-form, or equivalently a divisor $D$. Then the monodromy becomes $B\mapsto B+D$. We will have more to say about the specific $D$'s soon. 

The simplest physical effect of this monodromy on a D-brane is to shift its charge, and this translates in the Chan-Paton language into tensoring with the line bundle $\O_X(D)$. This observation readily extends to the derived category:
 \begin{prop}\label{p:lr} 
The large radius monodromy associated to the divisor $D$ is
\begin{equation*}
   \ms{L}_{D}(\mathsf{B}) = \mathsf{B}\Ltensor \O_X(D)\,,\qquad \mbox{for all $\mathsf{B} \in \D(X)$}\,.
\end{equation*}
Furthermore, this is a Fourier-Mukai functor $\Phi_{{\cal L}}$, with kernel 
\begin{equation*}
{\cal L}=\delta_*\O_X(D)\,,
\end{equation*}
where  $\delta \!: X \hookrightarrow X\! \times\! X$ is  the diagonal embedding.
\end{prop}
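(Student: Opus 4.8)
The statement has two parts: first, that the large radius monodromy $\ms{L}_D$ acts as $\mathsf{B} \mapsto \mathsf{B} \Ltensor \O_X(D)$; and second, that this functor is the Fourier-Mukai functor with kernel $\delta_* \O_X(D)$. The first part is essentially a definition/physical input (tensoring with $\O_X(D)$ is the Chan-Paton realization of the $B$-field shift), so the mathematical content is entirely in identifying the kernel. The plan is therefore to compute $\Phi_{\cal L}(\mathsf{B})$ for $\cal L = \delta_* \O_X(D)$ directly from the definition and show it equals $\mathsf{B} \Ltensor \O_X(D)$.

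First I would unwind the Fourier-Mukai formula: with $p_1, p_2 \colon X \times X \to X$ the two projections and $\delta \colon X \hookrightarrow X \times X$ the diagonal, we have
\begin{equation*}
\Phi_{\cal L}(\mathsf{B}) = \R p_{2*}\big( \delta_* \O_X(D) \Ltensor p_1^* \mathsf{B} \big).
\end{equation*}
The key step is the projection formula applied to the closed embedding $\delta$: since $\delta_* \O_X(D) \Ltensor p_1^* \mathsf{B} \cong \delta_*\big( \O_X(D) \Ltensor \delta^* p_1^* \mathsf{B}\big)$, and $p_1 \comp \delta = \id_X$, the right-hand side becomes $\delta_*\big( \O_X(D) \Ltensor \mathsf{B}\big)$. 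Then I would use $p_2 \comp \delta = \id_X$ together with $\R p_{2*} \comp \delta_* = \R(p_2 \comp \delta)_* = \id$ to conclude
\begin{equation*}
\Phi_{\cal L}(\mathsf{B}) \cong \mathsf{B} \Ltensor \O_X(D),
\end{equation*}
which matches $\ms{L}_D(\mathsf{B})$. One should note that all the functors involved are exact functors of triangulated categories and the identifications are functorial, so this genuinely identifies $\ms{L}_D$ with $\Phi_{\cal L}$ as functors, not merely pointwise.

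The main obstacle — more a point requiring care than a genuine difficulty — is justifying the derived projection formula and the compatibility $\R p_{2*} \delta_* \cong \R(p_2\delta)_*$ in the generality needed here, namely for smooth Deligne-Mumford stacks rather than just smooth projective varieties, since the paper works throughout in the stacky setting. I would either cite the standard references for Grothendieck duality and base change on stacks, or simply remark that $\delta$ is a regular closed immersion with $p_i \comp \delta = \id$, so the required identities hold by the usual functoriality of derived pushforward and the projection formula for a perfect (indeed invertible) object. A secondary bookkeeping point is that $\O_X(D)$ must be understood as $\delta^* p_1^* \O_X(D)$ consistently, but since $p_1 \comp \delta = \id$ this is automatic. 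No finiteness or boundedness issues arise because $\delta$ is proper and $\O_X(D)$ is a line bundle, keeping everything inside $\D(X)$.
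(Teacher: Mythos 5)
Your proposal is correct: the paper offers no explicit proof of Prop.~\ref{p:lr} (it is asserted as the standard extension of the Chan-Paton picture), and your computation — projection formula along the closed immersion $\delta$ plus $p_1\comp\delta=p_2\comp\delta=\id_X$, so that $\R p_{2*}\big(\delta_*\O_X(D)\Ltensor p_1^*\mathsf{B}\big)\iso \mathsf{B}\Ltensor\O_X(D)$ — is exactly the standard argument the paper implicitly relies on. Your remarks on the stacky setting and on $\delta_*=\R\delta_*$ for a closed immersion are the right points of care and need no further elaboration.
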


Now we turn our attention to the conifold-type monodromies. For this we need to introduce the Fourier-Mukai functor with kernel $\Cone{ \mathsf{A}^{\roof}\boxtimes\,\mathsf{A} \to \O_\Delta}$, where $\O_\Delta = \delta_*\O_X$, and  for $\mathsf{A}\in\D(X)$ its derived dual is
\begin{equation*}
\mathsf{A}^{\roof}= \R\!\!\sHom_{\D(X)}( \mathsf{A}, \O_X).
\end{equation*}
By Lemma~3.2 of \cite{ST:braid}, for any $\mathsf{B}\in\D(X)$:
\begin{equation*}
\Phi_{ \Cone{ \mathsf{A}^{\roof}\boxtimes\,\mathsf{A} \to \O_\Delta} }(\mathsf{B}) \iso
\Cone{ \Hom_{\D(X)}(\mathsf{A},\mathsf{B})\Ltensor\mathsf{A}\longrightarrow \mathsf{B} }.
\end{equation*}

Since the functor $\Phi_{ \Cone{ \mathsf{A}^{\roof}\boxtimes\,\mathsf{A}\to\O_\Delta } }$ will play a crucial role, we introduce a  notation for it:
\begin{equation}\label{e:refl}
\ms{T}_{\mathsf{A}} :=
\Phi_{ \Cone{\mathsf{A}^{\roof}\boxtimes\,\mathsf{A}\to\O_\Delta } }\,,\quad
\ms{T}_{\mathsf{A}}(\mathsf{B}) =
\Cone{ \Hom_{\D(X)}(\mathsf{A},\mathsf{B})\Ltensor\mathsf{A}\longrightarrow \mathsf{B} }\,.
\end{equation}
The functor $\ms{T}_{\mathsf{A}}$ is sometimes referred to as the Seidel-Thomas twist functor.

Returning to conifold-type monodromies, we have the following conjecture from \cite{en:Horja}:
\begin{conj}\label{conj:a}
If we loop around a component of the discriminant locus associated with a single D-brane $\mathsf{A}$ becoming
massless, then this results in a relabeling of D-branes  by applying $\ms{T}_{\mathsf{A}}$.
\end{conj}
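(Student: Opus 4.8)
\medskip
\noindent\textbf{Toward a proof of Conjecture~\ref{conj:a}.}
The plan is to reduce the statement to a local model near a generic point of the relevant component $\Delta_{\mathsf A}$ of the discriminant, and then to quote the monodromy of that local degeneration. First I would make precise what ``$\mathsf A$ becomes massless'' means: passing to the mirror, approaching $\Delta_{\mathsf A}$ corresponds to $\widetilde X$ acquiring a singularity, and $\mathsf A$ is the B-brane whose central charge degenerates there, $Z(\mathsf A)\to 0$ --- equivalently, the period paired with the vanishing cycle mirror to $\mathsf A$ picks up the relevant zero or logarithmic behaviour. Concretely, near such a point one expects the degeneration to be of the EZ type studied by Horja \cite{Horj:EZ}, with the collapsing locus carrying exactly the data out of which $\mathsf A$ is built (for the quintic conifold, $\mathsf A=\O_X$, the D6-brane wrapping all of $X$).

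Next I would import the monodromy of the EZ-degeneration. On the charge lattice this is classical Picard--Lefschetz: a loop around an ordinary double point acts by the reflection $\gamma\mapsto\gamma-\langle\gamma,\delta\rangle\,\delta$ in the vanishing cycle $\delta$, and $\delta$ is the class of $\mathsf A$. The categorical lift was identified by Seidel--Thomas: when $\mathsf A$ is a spherical object --- which holds in the cases of interest, e.g.\ $\Ext^\bullet_{\D(X)}(\O_X,\O_X)=\H^\bullet(X,\O_X)$ is the cohomology of a sphere of dimension $\dim X$ on a strict Calabi--Yau --- the twist $\ms{T}_{\mathsf A}$ of \eqref{e:refl} is an autoequivalence of $\D(X)$ that descends to precisely this reflection on $K$-theory. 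The argument then has two parts: (i) check that $\ms{T}_{\mathsf A}$ induces Picard--Lefschetz on $K(X)$, so that it matches the physically computed monodromy of D-brane charges; and (ii), more refinedly, show that the Fourier--Mukai kernel computed by Horja \cite{en:Horja} for the EZ monodromy is isomorphic to $\Cone{\mathsf A^{\roof}\boxtimes\,\mathsf A \to \O_\Delta}$, i.e.\ to the kernel defining $\ms{T}_{\mathsf A}$; Prop.~\ref{prop1} is what one uses to manipulate these kernels.

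The hard part is the bridge between the two descriptions of monodromy. ``Becoming massless'' is a statement about the central charge, which is transcendental --- a solution of the Picard--Fuchs system on $\c M_K(X)$ --- so identifying its monodromy with the categorical monodromy on $\D(X)$ is part of the (homological) mirror symmetry dictionary, and is under complete control only in examples. A second subtlety is the adjective ``single'': at non-generic points of $\Delta_{\mathsf A}$, and in some local models, several (even infinitely many) branes become massless at once, so one must argue that there is still a well-defined single spherical object --- or spherical collection --- whose twist reproduces the loop, and that the answer does not depend on which point of the component is encircled. Finally, globalising the local EZ computation requires the degeneration to be of EZ type along the whole component, which has to be checked case by case; this is precisely why the statement remains a conjecture.
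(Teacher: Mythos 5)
There is no proof of this statement in the paper for you to be compared against: Conjecture~\ref{conj:a} is stated as a conjecture, imported from \cite{en:Horja}, and is used in the paper only as a physical input (it is what justifies the monodromy formulas such as (\ref{e:m1}), (\ref{e:m15}) and (\ref{n:1}) that the paper then manipulates). What the paper actually proves are the finite-order identities that follow \emph{after} one accepts this identification of conifold-type monodromy with the twist $\ms{T}_{\mathsf{A}}$, together with the cited partial results: Seidel--Thomas \cite{ST:braid} showing $\ms{T}_{\mathsf{A}}$ is an autoequivalence for spherical $\mathsf{A}$, and \cite{en:Horja} proving the EZ-type autoequivalences only when the \CY\ is fibered over $\P^d$ (as the Discussion section itself emphasizes).

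Your proposal is therefore not wrong in spirit --- the reduction to a local EZ-model near a generic point of the component, the Picard--Lefschetz check on $K(X)$, and the comparison of the Horja kernel with $\Cone{\mathsf{A}^{\roof}\boxtimes\,\mathsf{A}\to\O_\Delta}$ via Prop.~\ref{prop1} are exactly the ingredients the literature assembles --- but it is a roadmap rather than a proof, and you say so yourself. The two steps you flag as ``the hard part'' are precisely what is open and what keeps the statement a conjecture: (i) the bridge between the transcendental statement ``$Z(\mathsf{A})\to 0$'' (a property of solutions of the Picard--Fuchs system on $\c M_K(X)$) and a specific autoequivalence of $\D(X)$ is part of the homological mirror symmetry dictionary and is established only case by case; and (ii) the identification of the monodromy kernel with the Seidel--Thomas kernel along an arbitrary component of the discriminant (not just the fibration cases of \cite{en:Horja,Horj:EZ}) is unproven, as is the independence of the choice of generic point and the treatment of loci where more than one brane becomes massless. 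So the honest assessment is: no gap relative to the paper, because the paper proves nothing here; but your text should be presented as a strategy for attacking the conjecture, not as a proof of it.
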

 
The question of when is $\ms{T}_{\mathsf{A}}$ an autoequivalence has a simple answer. For this we need the following definition:
\begin{definition} \label{def:spherical}
Let $X$ be smooth projective \CY\ variety (stack) of dimension $n$. An object $\mathsf{E}$ in $\D(X)$ is called {\em n-spherical} if $\Ext^r_{\D(X)}(\mathsf{E},\, \mathsf{E})\cong \H^r(S^n,\,\C)$, that is $\C$ for $r = 0,n$ and zero otherwise.
\end{definition}

One of the main results of \cite{ST:braid} is the following:
\begin{thm}({\rm Prop. 2.10 in \cite{ST:braid})}
If the object $\mathsf{E}\in \D(X)$ is n-spherical, then the functor $\ms{T}_{\mathsf{E}}$ is an autoequivalence.
\end{thm}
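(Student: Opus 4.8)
The plan is to show that $\ms{T}_{\mathsf{E}}$ has a two-sided inverse by exhibiting the inverse explicitly and then verifying the composition is the identity functor. The natural candidate for the inverse is the functor $\ms{T}'_{\mathsf{E}}$ with kernel $\Cone{\O_\Delta \to \mathsf{E}^{\roof}\boxtimes \mathsf{E}}[-1]$ (the ``dual twist''), which on objects sends $\mathsf{B}$ to the cone of $\mathsf{B} \to \RHom_{\D(X)}(\mathsf{E},\mathsf{B})^{\vee}\Ltensor \mathsf{E}$ shifted appropriately; using the $n$-Calabi-Yau property of $X$ together with Serre duality, $\RHom(\mathsf{E},\mathsf{B})^{\vee} \iso \RHom(\mathsf{B},\mathsf{E})[n]$, so this is the ``left twist'' dual to the ``right twist'' $\ms{T}_{\mathsf{E}}$. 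First I would set up the composition $\ms{T}'_{\mathsf{E}} \comp \ms{T}_{\mathsf{E}}$ at the level of kernels using Prop.~\ref{prop1}, reducing everything to a computation in $\D(X\times X)$.

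The key computational step is to evaluate the convolution of the two cone-kernels. Writing $\mathcal{E} = \mathsf{E}^{\roof}\boxtimes \mathsf{E}$, the kernel of $\ms{T}_{\mathsf{E}}$ is $\mathcal{P} = \Cone{\mathcal{E}\xrightarrow{\mathrm{ev}} \O_\Delta}$ and that of its candidate inverse is $\mathcal{P}' = \Cone{\O_\Delta \to \mathcal{E}}[-1]$, and I would show $\mathcal{P}' \FMcomp \mathcal{P} \iso \O_\Delta$. The heart of the matter is to compute the ``self-convolution'' $\mathcal{E}\FMcomp \mathcal{E}$, i.e. $\R p_{13*}(p_{23}^*\mathcal{E} \Ltensor p_{12}^*\mathcal{E})$ on $X\times X\times X$: this unwinds, via base change and the projection formula, to $\mathsf{E}^{\roof}\boxtimes \big(\RHom(\mathsf{E},\mathsf{E})\Ltensor \mathsf{E}\big)$, and crucially the composition $\mathcal{E}\FMcomp \mathcal{E}\to \mathcal{E}\FMcomp \O_\Delta = \mathcal{E}$ is identified with $\mathrm{id}_{\mathsf{E}^{\roof}}\boxtimes (\mathrm{ev}\otimes \mathrm{id})$. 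This is precisely where the $n$-spherical hypothesis enters: $\RHom_{\D(X)}(\mathsf{E},\mathsf{E}) \iso \C \oplus \C[-n]$, so the algebra is as simple as possible and the relevant octahedron collapses. One then chases through the octahedral axiom applied to the three composable maps $\O_\Delta \to \mathcal{E}$, $\mathcal{E} \to \O_\Delta$ (and back), exactly as in the proof of Prop.~2.10 of \cite{ST:braid}, to conclude that the unwanted terms cancel in pairs and $\mathcal{P}'\FMcomp \mathcal{P}\iso \O_\Delta$, whose Fourier-Mukai functor is $\id_{\D(X)}$ by Prop.~\ref{p:lr} with $D=0$.

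By symmetry — replacing $\mathsf{E}$ by its dual and using that the Calabi-Yau/Serre-duality identification is compatible with the two twists — the same computation gives $\mathcal{P}\FMcomp \mathcal{P}' \iso \O_\Delta$, so $\ms{T}_{\mathsf{E}}$ is an equivalence. The main obstacle is the bookkeeping in the self-convolution computation: keeping track of the derived tensor products, the (co)evaluation maps, and especially verifying that the connecting maps in the two octahedra match up so that the cancellation actually occurs, rather than merely that the objects have the right isomorphism type. The $n$-spherical condition is exactly what makes this tractable, since $\Ext^{\bullet}(\mathsf{E},\mathsf{E})$ having only the top and bottom class means there is essentially one map to track in each degree; without it the functor need not be invertible. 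I would carry this out at the kernel level throughout (rather than object-by-object) precisely to make the octahedral diagrams functorial and avoid checking naturality separately.
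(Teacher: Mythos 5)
First, a point of context: the paper does not prove this statement at all --- it is imported verbatim from Seidel--Thomas (Prop.~2.10 of \cite{ST:braid}), so there is no internal proof to compare against. Your route (exhibit an explicit quasi-inverse at the kernel level and collapse the convolution $\mathcal{P}'\FMcomp\mathcal{P}$ using $\Ext^\bullet(\mathsf{E},\mathsf{E})\iso\C\oplus\C[-n]$) is a legitimate and known strategy, but it is \emph{not} the argument of \cite{ST:braid} that you invoke: there, fully faithfulness of $\ms{T}_{\mathsf{E}}$ is checked on the spanning class $\{\mathsf{E}\}\cup\mathsf{E}^{\perp}$ using the existence of left and right adjoints (one computes $\ms{T}_{\mathsf{E}}(\mathsf{E})\iso\mathsf{E}[1-n]$ and $\ms{T}_{\mathsf{E}}(\mathsf{F})\iso\mathsf{F}$ for $\mathsf{F}\in\mathsf{E}^{\perp}$), and the equivalence is then deduced from the compatibility of the adjoints with the Serre functor; no inverse kernel is ever convolved. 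So the phrase ``exactly as in the proof of Prop.~2.10'' misattributes the argument. The trade-off is real: the spanning-class proof avoids all cone/octahedron bookkeeping (and the non-functoriality of cones), while your kernel-level proof, if completed, gives more --- an explicit inverse which is again of twist type.

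Two concrete gaps in your sketch. First, the candidate inverse kernel $\Cone{\O_\Delta\to\mathsf{E}^{\roof}\boxtimes\mathsf{E}}[-1]$ is wrong as written: $\Hom_{\D(X\times X)}(\O_\Delta,\mathsf{E}^{\roof}\boxtimes\mathsf{E})\iso\Hom_{\D(X)}(\mathsf{E},\mathsf{E}\otimes\omega_X^{-1}[-n])=0$, so there is no nonzero map to cone over; the coevaluation lives in $\Hom(\O_\Delta,(\mathsf{E}^{\roof}\boxtimes\mathsf{E})[n])\iso\Hom(\mathsf{E},\mathsf{E})\iso\C$, and the correct (adjoint) kernel is $\mathcal{P}^{\roof}\otimes p_i^*\omega_X[n]\iso\Cone{\O_\Delta\to(\mathsf{E}^{\roof}\boxtimes\mathsf{E})[n]}[-1]$ in the \CY\ case. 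Your ``shifted appropriately'' hides exactly the place where the $n$ of $n$-spherical and Serre duality enter. Second, and more seriously, the collapse of $\mathcal{P}'\FMcomp\mathcal{P}$ to $\O_\Delta$ does not follow from the isomorphism type of the self-convolution $(\mathsf{E}^{\roof}\boxtimes\mathsf{E})\FMcomp(\mathsf{E}^{\roof}\boxtimes\mathsf{E})\iso\mathsf{E}^{\roof}\boxtimes\bigl(\RHom(\mathsf{E},\mathsf{E})\Ltensor\mathsf{E}\bigr)$ alone (which you compute correctly): one must identify the component of the induced map on the summand corresponding to $\id_{\mathsf{E}}\in\Hom(\mathsf{E},\mathsf{E})$ with an isomorphism of $\mathsf{E}^{\roof}\boxtimes\mathsf{E}$ before the octahedra cancel anything. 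You flag this as bookkeeping, but it is the entire content of the proof; without it one only obtains that $\mathcal{P}'\FMcomp\mathcal{P}$ has a filtration with the right graded pieces, not that it is $\O_\Delta$. Finally, the closing ``by symmetry'' step is unnecessary: once $\mathcal{P}'\FMcomp\mathcal{P}\iso\O_\Delta$ is established, $\ms{T}_{\mathsf{E}}$ is fully faithful, and since on a \CY\ the left and right adjoint kernels coincide, essential surjectivity (hence equivalence) follows formally.
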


Let us mention at this point that in the rest of the paper whenever we  have an expression involving  the functor $\ms{T}_{\mathsf{E}}$, then $\mathsf{E}$ will always be spherical.

\subsection{Exceptional collections and autoequivalences}    \label{s:EC}

Exceptional collections play a surprisingly central role concerning the monodromy identities that we are to discuss. First we recall some facts about them, then we will list two propositions from \cite{AC} which constitute the technical backbone of this paper.

\begin{definition}
Consider the bounded derived category of coherent sheaves $\D(X)$ on the algebraic variety (smooth stack) $X$.
\begin{enumerate}
\item An object ${\c E} \in \D(X)$ is called {\em exceptional} if $\Ext^q({\c E},{\c E}) = 0$ for $q\neq 0$ and $\Ext^0({\c E},{\c E}) = {\mathbb C}$.
\item An {\em exceptional collection} $ ({\c E}_1, {\c E}_2, \ldots, {\c E}_n)$ in $\D(X)$ is an ordered collection of exceptional objects such that
\[
\Ext^q({\c E}_i, {\c E}_j) = 0 ,\qquad \mbox{for all q, whenever} \; i > j \ .
\]
\item An exceptional collection is {\em complete} or {\em full} if it generates $\D(X)$.
\end{enumerate}
\end{definition}

The existence of a full and strong exceptional collection for a given variety $X$ constrains the structure of $X$ considerably.  In particular, no smooth projective (and therefore compact) \CY\ variety admits such a collection; the obstruction comes from Serre duality.  The exceptional collections we are interested in are constructed on an exceptional divisor or the ambient space, i.e., where the \CY\ is embedded, rather than on $X$ itself.

Now we  turn to two propositions that  will be used repeatedly in the remainder of this paper. As we will see, in all the examples  considered, the monodromy identities dictated by CFT can always be explained using these two statements. It is also gratifying to remark that these statements were motivated by precisely the  kind of identities that they are now used to prove. More precisely, these statements were formulated by Alberto Canonaco as a consequence of trying to prove with the author the conjectures substantiated in \cite{en:fracC2} and \cite{en:fracC3}.

Let $X$ and $Y$ be smooth varieties/stacks, with canonical bundles $\omega_X$ resp. $\omega_Y$. Also assume that we have a full exceptional collection in $\D(Y)\colon {\c E}_1, {\c E}_2, \ldots, {\c E}_m$. There are two cases to consider. The first is when $X$ is a \CY\ hypersurface in $Y$. In this case we have that

\begin{prop}\label{pullback}
If $i\colon X\mono Y$ is the inclusion of a hypersurface such that
$\omega_Y\iso\O_Y(-X)$, then
$$\T{i^*{\c E}_{0}}\comp\cdots\comp\T{i^*{\c E}_{m}}\iso\L{\O_X(-X)[2]}.$$
\end{prop}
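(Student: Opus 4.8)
The strategy is to realize the composite $\T{i^*{\c E}_{0}}\comp\cdots\comp\T{i^*{\c E}_{m}}$ as a single Fourier–Mukai functor by composing the kernels via Prop.~\ref{prop1}, and then to identify the resulting kernel on $X\times X$ with $\delta_*\O_X(-X)[2]$, which by Prop.~\ref{p:lr} is precisely the kernel of $\L{\O_X(-X)[2]}$. The organizing principle is that the full exceptional collection ${\c E}_1,\dots,{\c E}_m$ on $Y$ gives, by the result of Canonaco quoted above, a Beilinson-type resolution of the structure sheaf of the diagonal $\O_{\Delta_Y}\in\D(Y\times Y)$: a complex whose terms are the external products ${\c E}_j^{\roof}\boxtimes{\c F}_j$ built from the collection and its left dual collection $\{{\c F}_j\}$. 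The twist kernels $\Cone{{\c E}_j^{\roof}\boxtimes{\c E}_j\to\O_{\Delta_Y}}$ are the ``one-step'' pieces of this resolution, so that composing all $m+1$ of them telescopes the Beilinson resolution and leaves behind only the ``tail'', which is a shift of $\omega_Y$-twisted diagonal — precisely because the dual collection wraps around via Serre duality.

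First I would work on $Y$ itself: compose the functors $\T{{\c E}_{0}}\comp\cdots\comp\T{{\c E}_{m}}$ on $\D(Y)$ and show, using Canonaco's resolution of the diagonal, that this composite is isomorphic to $(-)\Ltensor\omega_Y[\dim Y]$, i.e. the Serre functor up to the shift already built into the $\T{}$'s; this is the content (or a close variant) of the arguments behind \cite{en:Alberto} and \cite{AC} for weighted projective space, now run on a general $Y$ with a full exceptional collection. Next I would push this down to the hypersurface $X$. Here I would use the projection formula and base change for the inclusion $i\colon X\mono Y$ to relate $\T{i^*{\c E}_j}$ on $\D(X)$ to the restriction of $\T{{\c E}_j}$ on $\D(Y)$; the key input is that $i^*$ sends the exceptional collection on $Y$ to a generating (though no longer exceptional, by Serre duality on the \CY) collection on $X$, and that the adjunction $i^*i_*$ produces a cone involving $\O_X(-X)=\omega_X$ via $\omega_Y\iso\O_Y(-X)$ and the adjunction triangle $\O_X(-X)[1]\to i^*i_*\O_X\to\O_X$. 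Finally I would assemble: the hypersurface version of the telescoped kernel is $\delta_*\omega_Y|_X[\dim Y] = \delta_*(\omega_X\otimes\O_X(X))[\dim X+1]$, but the Calabi–Yau condition $\omega_X\iso\O_X$ together with $\O_Y(-X)|_X$ bookkeeping collapses this to $\delta_*\O_X(-X)[2]$.

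**Main obstacle.** The hard part will be the passage from $Y$ to $X$: controlling how the twist functors and the Beilinson resolution of the diagonal behave under restriction to the hypersurface. On $Y$ the collection is genuinely exceptional and the resolution of $\O_{\Delta_Y}$ is clean, but $i^*{\c E}_j$ need not be exceptional on $X$, so one cannot directly invoke Canonaco's theorem on $X$; instead one must restrict the resolution from $Y\times Y$ to $X\times X$ and check that the extra homology introduced by the non-transversality (the self-intersection $\O_X(X)$ appearing in $N_{X/Y}$) is exactly what upgrades the shift from $[\dim Y]=[\dim X+1]$ with an $\omega_Y$-twist to $[2]$ with an $\O_X(-X)$-twist. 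Keeping the degree shifts and the line-bundle twists synchronized through the cone manipulations — and verifying that no spurious terms survive in the composite kernel — is where the real care is needed; the \CY\ hypothesis $\omega_Y\iso\O_Y(-X)$ (equivalently $\omega_X\iso\O_X$ by adjunction) is precisely the numerical coincidence that makes the cancellation work, so I would expect it to enter decisively at exactly this step.
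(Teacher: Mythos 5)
Your overall frame (a Beilinson-type resolution of the diagonal built from the full exceptional collection, against which the convolution of the twist kernels telescopes) is indeed the mechanism behind the cited proof in \cite{AC}, but your concrete route contains a step that is false and shift bookkeeping that cannot produce the stated answer. The intermediate claim on $Y$ — that $\T{{\c E}_0}\comp\cdots\comp\T{{\c E}_m}\iso(-)\Ltensor\omega_Y[\dim Y]$ — fails: for an exceptional object the evaluation map $\Hom^\bullet_{\D(Y)}({\c E}_j,{\c E}_j)\Ltensor{\c E}_j\to{\c E}_j$ is an isomorphism, so $\T{{\c E}_j}({\c E}_j)=0$; moreover for $k>j$ one has $\Hom^\bullet({\c E}_k,{\c E}_j)=0$, so $\T{{\c E}_k}$ fixes ${\c E}_j$. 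Hence the composite (rightmost functor applied first) sends every ${\c E}_j$ to $0$, and since the collection is full it annihilates all of $\D(Y)$ — it is very far from the Serre functor, and twists by exceptional objects on $Y$ are not autoequivalences at all. There is therefore nothing to ``push down'' from $Y$: the content of the proposition lives on $X$, where $i^*{\c E}_j$ becomes \emph{spherical} precisely because $\omega_Y\iso\O_Y(-X)$ (apply $\Hom_Y({\c E}_j,-)$ to the triangle ${\c E}_j\Ltensor\O_Y(-X)\to{\c E}_j\to i_*i^*{\c E}_j$ and use Serre duality on $Y$), and the telescoping argument must be run directly with the kernels on $X\times X$, i.e.\ with the restriction of the diagonal resolution along $X\times X\mono Y\times Y$, as in \cite{en:Alberto,AC}.

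Relatedly, your final identification of the ``tail'' cannot be right as stated: $\delta_*(\omega_Y|_X)[\dim Y]$ carries a dimension-dependent shift $[\dim X+1]$, whereas the proposition asserts $[2]$ in every dimension (compare the weighted projective space case $(M_{P_{LG}})^{\sum w_i}\iso(-)[2]$ quoted in the introduction, valid for hypersurfaces of arbitrary dimension). The $[2]$ does not arise by ``upgrading'' $[\dim Y]$; it is dimension-independent because it comes from the codimension-one excess intersection of $\Delta_Y$ with $X\times X$: $\mathbf{L}(i\times i)^*\O_{\Delta_Y}$ has exactly two cohomology sheaves, $\delta_*\O_X$ in degree $0$ and $\delta_*\O_X(-X)$ in degree $-1$ (the conormal bundle $\O_X(-X)$, where the hypothesis $\omega_Y\iso\O_Y(-X)$ enters), and the telescoped composite of the $m+1$ twist kernels isolates the latter with one further shift, yielding the kernel $\delta_*\O_X(-X)[2]$ of $\L{\O_X(-X)[2]}$. (Also a small slip: $\omega_Y|_X=\omega_X\Ltensor\O_X(-X)=\O_X(-X)$, not $\omega_X\Ltensor\O_X(X)$.) So the proposal needs to be repaired by discarding the Serre-functor identity on $Y$ and carrying out the cancellation entirely on $X\times X$, where the degree count comes from the length-one Koszul resolution of $\O_X$ in $Y$ rather than from $[\dim Y]$.
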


In the second case $Y$ is a hypersurface in $X$, where $X$ is \CY ; and we have and analogous statement

\begin{prop}\label{pushout}
If $j\colon Y\mono X$ is the inclusion of a hypersurface such that
$j^*\omega_X\iso\O_Y$, then
$$\T{j_*{\c E}_{0}}\comp\cdots\comp\T{j_*{\c E}_{m}}
\iso\L{\O_X(Y)}.$$
\end{prop}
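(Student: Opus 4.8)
The plan is to reduce the statement to Proposition \ref{pullback}, or more precisely to mirror its proof in the "pushforward" setting, using the Fourier--Mukai formalism of Section \ref{s:m} together with the Beilinson-type resolution of the diagonal attached to the full exceptional collection $\c E_1,\ldots,\c E_m$ on $Y$. First I would recall, following \cite{AC}, that a full exceptional collection on $Y$ produces a resolution of $\O_{\Delta_Y} \in \D(Y\times Y)$ whose terms are of the form $\c E_k^\roof \boxtimes \c E_k$ (up to the appropriate dual/left-dual collection); composing with the closed embedding $j\times j\colon Y\times Y \mono X\times X$ and using $(j\times j)_* \O_{\Delta_Y} \iso \O_{\Delta_X}\Ltensor(\text{something supported on }Y)$ — here is where the hypothesis $j^*\omega_X\iso\O_Y$ enters, via the adjunction/projection formula controlling $Lj^* j_*$ and the normal bundle $\O_Y(Y)$ — I would assemble an expression for the iterated cone $\T{j_*\c E_0}\comp\cdots\comp\T{j_*\c E_m}$ as a single Fourier--Mukai functor and identify its kernel.

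Concretely, the key steps in order: (1) using \eqref{e:refl} and Proposition \ref{prop1}, write each $\T{j_*\c E_k}$ as $\Phi_{\c P_k}$ with $\c P_k = \Cone{(j_*\c E_k)^\roof \boxtimes (j_*\c E_k) \to \O_{\Delta_X}}$, and compute the composite kernel $\c P_0 \FMcomp \cdots \FMcomp \c P_m$ by iterated convolution; (2) show that this composite kernel, as a consequence of the exceptionality relations $\Ext^*(\c E_i,\c E_j)=0$ for $i>j$ and the completeness of the collection, telescopes — the "off-diagonal" Ext groups vanish so the convolution collapses — leaving precisely the Beilinson resolution of $\O_{\Delta_Y}$ pushed forward along $j\times j$; (3) identify $(j\times j)_*$ of that resolution with a two-term complex on $X\times X$ whose cohomology computes $\delta_{X*}\O_X(Y)$, i.e.\ the kernel $\c L$ of $\L{\O_X(Y)}$ from Proposition \ref{p:lr}. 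The Calabi--Yau hypothesis on $X$ and the triviality of $j^*\omega_X$ are what make the Serre-duality bookkeeping in step (2) symmetric, so that the dual collection on $Y$ reorganizes the cone the right way; compare the analogous role of $\omega_Y\iso\O_Y(-X)$ in Proposition \ref{pullback}.

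I expect the main obstacle to be step (3): controlling the derived pushforward $(j\times j)_*$ of the diagonal resolution and pinning down the precise twist by $\O_X(Y)$ (as opposed to $\O_X$, or a shift). This amounts to a careful use of the projection formula and the fundamental exact sequence $0\to\O_X(-Y)\to\O_X\to j_*\O_Y\to 0$, iterated along the length-$(m{+}1)$ resolution, keeping track of how each application of $Lj^*$ on $j_*$ picks up a copy of the conormal bundle; the hypothesis $j^*\omega_X\iso\O_Y$ is exactly what is needed to prevent spurious shifts from accumulating. A secondary subtlety is making sure the indexing of the collection $\c E_0,\ldots,\c E_m$ versus its left dual collection is consistent with the direction of the twists $\T{\cdot}$ in the composition, so that the telescoping in step (2) goes through with the stated ordering rather than its reverse. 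Once the kernel is identified with $\delta_{X*}\O_X(Y)$, the conclusion follows immediately from Proposition \ref{p:lr}.
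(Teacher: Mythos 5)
You are not actually competing with a proof in this paper: Proposition~\ref{pushout} is quoted from \cite{AC}, and the paper only indicates that the proof there goes through the Beilinson-type resolution of the diagonal of $Y$ attached to the full exceptional collection. Your outline follows that intended route, but it has a genuine gap at its central step (2). The telescoping of the iterated convolution cannot be justified by saying that ``the off-diagonal Ext groups vanish so the convolution collapses'': the exceptionality vanishing $\Ext^*_Y(\c E_i,\c E_j)=0$ for $i>j$ holds on $Y$, but it does \emph{not} persist for the pushed-forward objects on $X$. By adjunction $\omega_Y\iso j^*\omega_X\otimes\O_Y(Y)\iso\O_Y(Y)$ under your hypothesis, and the standard triangle for $\LL j^*j_*$ together with Serre duality on $Y$ gives a summand $\Ext^{\dim Y+1-k}_Y(\c E_j,\c E_i)^\vee$ inside $\Ext^k_X(j_*\c E_i,j_*\c E_j)$, which is generically nonzero precisely when $i>j$. (This is the same computation that makes each $j_*\c E_k$ spherical rather than exceptional, and it is why the order of the twists matters at all.) So the collapse of the convolution is not a formal consequence of one-directional Ext vanishing between the $j_*\c E_k$; it is exactly the content of the dual-collection/resolution-of-the-diagonal argument of \cite{AC}, which your sketch assumes rather than supplies.

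A second, smaller omission sits in step (3): the twist by $\O_X(Y)$ is produced not by iterating $0\to\O_X(-Y)\to\O_X\to j_*\O_Y\to0$ but by Grothendieck duality for the divisor embedding, $(j_*\c E_k)^\roof\iso j_*\bigl(\c E_k^\vee\otimes\O_Y(Y)\bigr)[-1]$, which turns each kernel $\Cone{(j_*\c E_k)^\roof\boxtimes j_*\c E_k\to\O_{\Delta_X}}$ into the cone on $(j\times j)_*$ of a normal-bundle-twisted object shifted by $[-1]$; this is where the hypothesis $j^*\omega_X\iso\O_Y$ does its real work. Granting the collapse, the composite kernel $\c Q$ then sits in a triangle $\O_{\Delta_X}\to\c Q\to(j\times j)_*\delta_{Y*}\O_Y(Y)$, and comparing with $\delta_*$ applied to $0\to\O_X\to\O_X(Y)\to j_*\O_Y(Y)\to0$ identifies $\c Q\iso\delta_*\O_X(Y)$, the kernel of $\L{\O_X(Y)}$ from Proposition~\ref{p:lr}; one still has to check that the connecting morphism is the canonical one, a point your outline leaves open. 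With these two items supplied, your plan coincides with the argument of \cite{AC}.
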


Note that if $X$ is \CY , then the condition $j^*\omega_X\iso\O_Y$ is automatically satisfied.

\section{Examples}    \label{s:Ex}

\subsection{\texorpdfstring{$\C^{\, 2}/\Z_3$}{C2/Z3} }    \label{s:c2z3}

In this section we focus on the $\C^2/\Z_3$ geometric orbifold and the associated CFT. In this case there is only one supersymmetric $\Z_3$ action
\begin{equation*}
(z_1,z_2)\mapsto (\omega z_1,\omega^2 z_2)\,, \qquad \omega ^3=1\,.
\end{equation*}
First let us review some of the findings and notations of \cite{en:fracC2}, then generalize them. The crepant resolution of the singularity, denoted by $X$, has a reducible exceptional divisor. The toric fan of the resolved space $X$ is shown in Fig.~\ref{f:fan}.
\begin{figure}[h]
\begin{equation}\nonumber
\begin{xy} <1.3cm,0cm>:
{\ar 0;(0,1) }, (0,1.2)*\txt{$v_2$=(0,1)}
,{\ar (0,0);(1,0) *+!LD{v_3=\frac{v_1+2v_2}{3}=(1,0)}}
,{\ar 0;(2,-1) *+!LD{v_4=\frac{2v_1+v_2}{3}=(2,-1)}}
,{\ar 0;(3,-2) *+!LD{v_1=(3,-2)}}
,{\ar@{-}@{.>} (0,1);(3,-2) }
\end{xy}
\end{equation}
  \caption{The toric fan for the resolution of the $\C^2/\Z_3$ singularity.}
  \label{f:fan}
\end{figure}
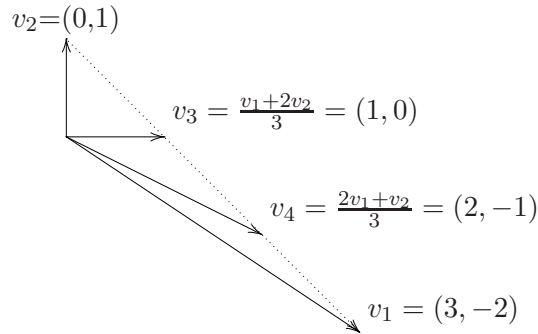
Let $C_i$ denote the divisor associated to the vertex $v_i$, which in this case is a curve. The exceptional locus of the blow-up consists of the divisors $C_3$ and $C_4$, both $-2$ curves.

The curves $C_3$ and $C_4$ are the generators of the Mori cone of effective curves. The Kahler cone is dual to the Mori cone, and both are two dimensional. The Poincare duals of the curves $C_i$ are denoted by  $D_i$. Since we are in two complex dimensions, an irreducible divisor is a curve. This leads to potential confusion. To avoid it, the reader should remember that $C_i$ lives in the second homology $\H_2(X,\Z)$, while $D_i$ lives in the second cohomology $\H^2(X,\Z)$. 

Now let us look at the moduli space of complexified Kahler forms. The point-set spanned by the rays of the toric fan, $\c A=\{v_1,\ldots, v_4 \}$, admits four triangulations, i.e., in the language of the gauged linear sigma model we have four phases. The secondary fan is depicted in Fig.~\ref{fig:Z3}, which is the toric fan of the Kahler moduli space. The four phases are the completely resolved smooth phase; the two phases where one of the $\P^1$'s has been blown up to partially resolve the $\Z_3$ fixed point to a $\Z_2$ fixed point; and finally the $\Z_3$ orbifold phase.

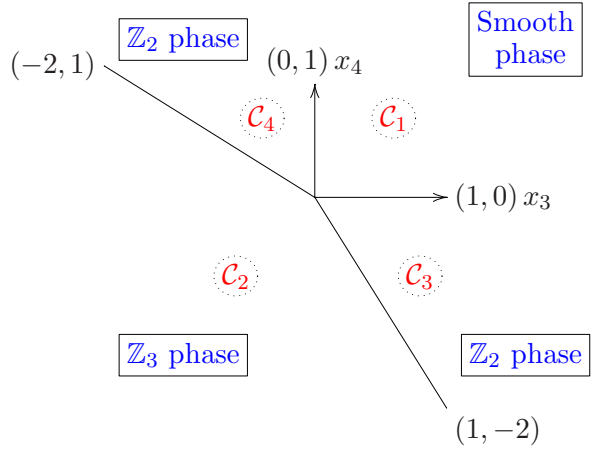
\begin{figure}
\begin{equation}\nonumber
\begin{xy} <3.5cm,0cm>:
{\ar (0,0);(.5,0) *+!L{(1,0)\,x_3}}
,{\ar 0;(0,.6) *+!U{(0,1)\,x_4}}
,{\ar@{-} 0;(.5,-.8) *+!LU{(1,-2)}}
,{\ar@{-} 0;(-.8,.5) *+!R{(-2,1)}}
,(.8,.6)*+[F]\txt{\color{blue} Smooth\\ \ \color{blue} phase},
,(.3,.3)*+[o][F.]{\color{red} {\mathcal C}_1}
,(-.5,-.6)*+[F]\txt{\color{blue} $\Z_3$  \color{blue} phase}
,(-.3,-.3)*+[o][F.]{\color{red} {\mathcal C}_2}
,(.8,-.6)*+[F]\txt{\color{blue} $\Z_2$  \color{blue} phase}
,(.4,-.3)*+[o][F.]{\color{red} {\mathcal C}_3}
,(-.5,.6)*+[F]\txt{\color{blue} $\Z_2$  \color{blue} phase}
,(-.2,.3)*+[o][F.]{\color{red} {\mathcal C}_4}
\end{xy}
\end{equation}
  \caption{The phase structure of the $\C^2/\Z_3$ model.}
  \label{fig:Z3}
\end{figure}

The orbifold points in the moduli space are themselves singular points. This fact is related to the quantum symmetry of an orbifold theory. For either of the $\Z_2$ points, one has a $\C^2/\Z_2$ singularity with weights $(1,-1)$, while the $\Z_3$ point the moduli space locally is of the form $\C^2/\Z_3$, with weights $(1,2)$. 

The four maximal cones of Fig.~\ref{fig:Z3}, ${\mathcal C}_1$,\ldots , ${\mathcal C}_4$, correspond to the four distinguished phase points. The four edges correspond to curves in the moduli space, denoted ${\mathcal L}_1,\ldots , {\mathcal L}_4$. These are all weighted projective lines, all isomorphic to $\P^1$ as varieties, but not as stacks. The four curves connecting the different phase points are sketched in Fig.~\ref{fig:modsp}, together with the discriminant locus of singular CFT's. The discriminant $\Delta_0$ intersects the four lines transversely. We depicted this fact in Fig.~\ref{fig:modsp} using short segments.

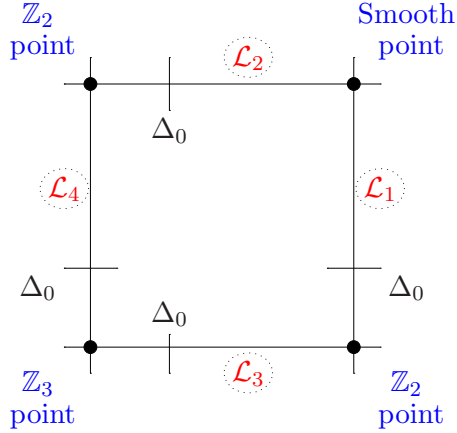
\begin{figure}
\begin{equation}\nonumber
\begin{xy} <3.5cm,0cm>:
{\ar@{-} (-.1,0);(1.1,0) }	,{\ar@{-} (1,.1);(1,-1.1) }	,{\ar@{-} (1.1,-1);(-.1,-1)}	,{\ar@{-} (0,.1);(0,-1.1)} 
,(1.2,.2)*\txt{\color{blue} Smooth\\ \ \color{blue} point}
,(.6,.1)*+[o][F.]{\color{red} {\mathcal L}_2}
,(1.2,-1.2)*\txt{\color{blue} $\Z_2$\\ \ \color{blue} point}
,(1.1,-.4)*+[o][F.]{\color{red} {\mathcal L}_1}
,(-.2,-1.2)*\txt{\color{blue} $\Z_3$\\ \ \color{blue} point}
,(.6,-1.1)*+[o][F.]{\color{red} {\mathcal L}_3}
,(-.2,.2)*\txt{\color{blue} $\Z_2$\\ \ \color{blue} point}
,(-.1,-.4)*+[o][F.]{\color{red} {\mathcal L}_4}
,(0,0)*+=[o]=<1.7mm>[F**:black][black]{.},(0,-1)*+=[o]=<1.7mm>[F**:black][black]{.}
,(1,0)*+=[o]=<1.7mm>[F**:black][black]{.},(1,-1)*+=[o]=<1.7mm>[F**:black][black]{.}
,{\ar@{-} (0.3,0.1);(0.3,-.1)   *+!U{\Delta_0}}		,{\ar@{-} (0.3,-1.1);(0.3,-.8)   *+!U{\Delta_0}}			
,{\ar@{-} (0.9,-.7);(1.2,-.7)   *+!U{\Delta_0}}		,{\ar@{-} (.1,-.7);(-.2,-.7)   *+!U{\Delta_0}}	
\end{xy}
\end{equation}
  \caption{The moduli space of the $\C^2/\Z_3$ model.}
  \label{fig:modsp}
\end{figure}

When talking about monodromy there are two cases to be considered. One can loop around a divisor, i.e., real codimension two objects; or one can loop around a point inside a complex curve. Of course the two notions are not unrelated. Our interest will be the second type of monodromy: looping around a point inside a curve.\footnote{As explained by many authors, since ${\mathcal L}_i$ is not part of the moduli space, rather it is only a compactification divisor, one cannot consider loops inside it. Instead, the loops in question are infinitesimally close to being in ${\mathcal L}_i$. }

It was shown in \cite{en:fracC2} that monodromy around the $\Z_2$ point inside ${\mathcal L}_1$ is
\begin{equation}\label{e:m1} 
\ms{M}_{\Z_2}\, = \, \ms{T}_{i_*\O_{C_3}} \comp \, \ms{L}_{D_2}\,,
\end{equation}
while monodromy around the $\Z_2$ point inside ${\mathcal L}_2$ is
\begin{equation}\label{e:m11} 
\ms{M}_{\Z_2}'\, = \, \ms{T}_{i_*\O_{C_4}} \comp \, \ms{L}_{D_1}\,.
\end{equation}
Here $i\colon C_3\mono X$ resp. $j\colon C_4\mono X$ are the embedding maps, while $C_3$ and $C_4$ are the exceptional divisors defined at the beginning of the subsection. Finally,  monodromy inside ${\mathcal L}_3$ around the $\Z_3$ point is given by
\begin{equation}\label{e:m2}
\ms{M}_{\Z_3}\, = \, \ms{T}_{j_*\O_{C_4}}\comp \, \ms{M}_{\Z_2}
\, = \, \ms{T}_{j_*\O_{C_4}}\comp \,  \ms{T}_{i_*\O_{C_3}} \comp \, \ms{L}_{D_2}\,.
\end{equation}

Using an approach analogous to the one deployed in \cite{en:Alberto}, \cite{en:fracC2} showed that
\begin{equation}\label{c21}
 (\ms{M}_{\Z_2})^2\, = \L{D_1}\,\quad (\ms{M}_{\Z_2}')^2\, =\, \L{D_2}.
\end{equation}
These identities were reproved in \cite{AC} using Proposition~\ref{pullback}.

\cite{en:fracC2}  also conjectured that 
\begin{equation}\label{c22}
(\ms{M}_{\Z_3})^{\, 3}\, \cong\, \id_{\D(X)},
\end{equation} 
and checked this statement at the level of Chern characters ($\id_{\D(X)}$ is the identity functor of ${\D(X)}$). Here we provide a proof for (\ref{c22}), which is different from the one in \cite{AC} by being more closely tied with the physics of the example, and which will pave the way for  similar result presented in the sequel, which of course are not proven in \cite{AC}.

We start by recalling Lemma~8.21 from \cite{Huybrechts}, which shows how to conjugate a Seidel-Thomas twist functor by an autoequivalence.
\begin{lemma}\label{STcomp}
If $\s{F}\in\D(X)$ and ${\mathsf G}$ is an autoequivalence
of $\D(X)$, then
$${\mathsf G}\comp\T{\s{F}}\iso
\T{{\mathsf G}(\s{F})}\comp{\mathsf G}.$$
\end{lemma}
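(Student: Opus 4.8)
\textbf{Proof plan for Lemma~\ref{STcomp}.}
The plan is to use the characterization of the Seidel--Thomas twist as a Fourier--Mukai functor and exploit the functoriality of the cone construction in~(\ref{e:refl}). First I would recall that for any object $\s{F}\in\D(X)$ and any $\mathsf{B}\in\D(X)$ the functor $\T{\s{F}}$ is, by~(\ref{e:refl}), the cone on the evaluation map, so that there is a distinguished triangle
\begin{equation*}
\Hom_{\D(X)}(\s{F},\mathsf{B})\Ltensor\s{F}\longrightarrow\mathsf{B}\longrightarrow\T{\s{F}}(\mathsf{B})\longrightarrow.
\end{equation*}
Applying the autoequivalence $\mathsf G$ to this triangle, and using that $\mathsf G$ is exact (hence sends distinguished triangles to distinguished triangles) and commutes with the shift, I obtain a distinguished triangle
\begin{equation*}
\mathsf G\big(\Hom_{\D(X)}(\s{F},\mathsf{B})\Ltensor\s{F}\big)\longrightarrow\mathsf G(\mathsf{B})\longrightarrow\mathsf G\big(\T{\s{F}}(\mathsf{B})\big)\longrightarrow.
\end{equation*}

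The key step is to identify the first term. Since $\Hom_{\D(X)}(\s{F},\mathsf{B})$ is a graded vector space (a complex of vector spaces with zero differential up to quasi-isomorphism), tensoring with it commutes with the additive functor $\mathsf G$, giving $\mathsf G\big(\Hom_{\D(X)}(\s{F},\mathsf{B})\Ltensor\s{F}\big)\iso\Hom_{\D(X)}(\s{F},\mathsf{B})\Ltensor\mathsf G(\s{F})$. Now I use that $\mathsf G$ is an \emph{equivalence}: it induces isomorphisms on all $\Hom$-spaces, so $\Hom_{\D(X)}(\s{F},\mathsf{B})\iso\Hom_{\D(X)}(\mathsf G(\s{F}),\mathsf G(\mathsf{B}))$, naturally in $\mathsf{B}$. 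Substituting, the triangle becomes
\begin{equation*}
\Hom_{\D(X)}(\mathsf G(\s{F}),\mathsf G(\mathsf{B}))\Ltensor\mathsf G(\s{F})\longrightarrow\mathsf G(\mathsf{B})\longrightarrow\mathsf G\big(\T{\s{F}}(\mathsf{B})\big)\longrightarrow,
\end{equation*}
whose first two terms and first map are exactly those defining $\T{\mathsf G(\s{F})}$ applied to $\mathsf G(\mathsf{B})$, i.e. to $(\T{\mathsf G(\s{F})}\comp\mathsf G)(\mathsf{B})$. Hence $\mathsf G\big(\T{\s{F}}(\mathsf{B})\big)$ and $\T{\mathsf G(\s{F})}\big(\mathsf G(\mathsf{B})\big)$ are both a cone on the same morphism.

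The main obstacle is the standard subtlety that an isomorphism of cones built from an isomorphism of the base morphism is not automatically \emph{natural} in $\mathsf{B}$ — the cone is only a weak (non-functorial) construction in the triangulated category. To upgrade the object-level isomorphism to an isomorphism of functors I would instead argue at the level of Fourier--Mukai kernels: by Theorem~\ref{thm:orlov} the autoequivalence $\mathsf G$ has a kernel $\c K_{\mathsf G}$, the twist $\T{\s{F}}$ has kernel $\Cone{\s{F}^{\roof}\boxtimes\s{F}\to\O_\Delta}$, and one computes the kernels of the two composites $\mathsf G\comp\T{\s{F}}$ and $\T{\mathsf G(\s{F})}\comp\mathsf G$ via Prop.~\ref{prop1}. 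The cone construction \emph{is} functorial at the level of kernels (it is an honest mapping cone of a map of complexes of sheaves on the product), and pushing $\c K_{\mathsf G}$ through the defining triangle of the twist kernel, together with the identity $\mathsf G(\s{F})^{\roof}\iso$ (appropriate dual twisted by $\mathsf G$), identifies the two kernels up to isomorphism; this then yields the isomorphism of functors. Alternatively, since this is quoted as Lemma~8.21 of \cite{Huybrechts}, I would simply cite that reference for the naturality and present the triangle computation above as the conceptual content.
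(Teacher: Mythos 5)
Your proposal is correct, and it matches how the paper itself handles this statement: the paper gives no proof, simply quoting Lemma~8.21 of \cite{Huybrechts}, which is exactly your fallback citation. Your object-level triangle argument together with the kernel-level repair --- where the real content lies, namely computing via Prop.~\ref{prop1} that convolving the kernel of $\mathsf G$ with $\Cone{\s{F}^{\roof}\boxtimes\,\s{F}\to\O_\Delta}$ on the two sides yields isomorphic kernels, using compatibility of $\mathsf G$ with duals --- is precisely the standard argument of that reference, so you have correctly identified both the naturality subtlety and its resolution.
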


First we simplify $(\ms{M}_{\Z_3})^{2}$ using lemma~\ref{STcomp}:\footnote{For brevity we omit $i$ and $j$ from $\ms{T}_{i_*\O_{C_3}} $ and $\ms{T}_{j_*\O_{C_4}}$.}
\begin{equation*}
 \ms{M}_{\Z_3}^{\, 2} = \ms{M}_{\Z_3} \comp (\T{\O_{C_4}} \comp \ms{M}_{\Z_2})
\iso  \T{\ms{M}_{\Z_3}(\O_{C_4})}  \comp  \ms{M}_{\Z_3} \comp \ms{M}_{\Z_2}
=  \T{\ms{M}_{\Z_3}(\O_{C_4})}  \comp   \T{\O_{C_4}} \comp \ms{M}_{\Z_2}^{\, 2}.
\end{equation*}
As it was shown in Sec. 4.2 of \cite{en:fracC2}, $\ms{M}_{\Z_3}(\O_{C_4})= {\O_{C_3}}$; while (\ref{c21}) shows that $\ms{M}_{\Z_2}^{\, 2}\iso \L{D_1}$, therefore
\begin{equation}\label{1}
 \ms{M}_{\Z_3}^{\, 2} \iso \T{\O_{C_3}} \comp \T{\O_{C_4}}  \comp \L{D_1}
=\T{\O_{C_3}} \comp \ms{M}_{\Z_2}',
\end{equation}
where $\ms{M}_{\Z_2}'$ was defined in (\ref{e:m11}).

Now observe that 
\begin{equation}\label{2}
\ms{M}_{\Z_3}= \T{\O_{C_4}} \comp \ms{M}_{\Z_2}=\ms{M}_{\Z_2}'  \comp \L{D_1}^{-1}\comp \ms{M}_{\Z_2}.
\end{equation}
Therefore (\ref{1}) and (\ref{2}) imply that
\begin{equation*}
 \ms{M}_{\Z_3}^{\, 3} \iso \T{\O_{C_3}} \comp (\ms{M}_{\Z_2}')^{\, 2} \comp \L{D_1}^{-1}\comp \ms{M}_{\Z_2}.
\end{equation*}
But $(\ms{M}_{\Z_2}')^{\, 2}\iso \L{D_2}$ by (\ref{c21}), while  $\T{\O_{C_3}} \comp \L{D_2}=\ms{M}_{\Z_2}$. Thus
\begin{equation*}
 \ms{M}_{\Z_3}^{\, 3} \iso \ms{M}_{\Z_2} \comp \L{D_1}^{-1}\comp \ms{M}_{\Z_2}\iso \id_{\D(X)},
\end{equation*}
where in the last relation we used (\ref{c21}) again.

The idea that one has to take away from this proof is the concept, rather than the manipulations. One starts out with $\ms{M}_{\Z_3}$, and then rewrite it's third power in such a way that the already established identities in (\ref{c21}) can be used. Note also that both identities in (\ref{c21}) were needed, along with two of the three fractional branes, $\O_{C_3}$ and $\O_{C_4}$, and the fact that the $\Z_3$ monodromy permutes the fractional branes, i.e., $\ms{M}_{\Z_3}(\O_{C_4})= {\O_{C_3}}$ from \cite{en:fracC2}. The same philosophy will guide the proof of the next subsection.

\subsection{\texorpdfstring{$\C^{\, 3}/\Z_5$}{C3/Z5}}    \label{s:c2z5}

Let us now turn to the example of $\C^{\, 3}/\Z_5$, as treated in \cite{en:fracC3}. First we review the relevant toric geometry of $\C^3/\Z_5$, then  the moduli space of complexified Kahler forms. Throughout, we follow closely the notation of  \cite{en:fracC3}.

Once again, there is a unique supersymmetric $\Z_5$ action, i.e., $\Z_5\subset \operatorname{SL}(3,\Z)$:
\begin{equation*}
(z_1,z_2,z_3)\mapsto (\omega z_1,\omega z_2,\omega^3 z_3)\,, \qquad \omega ^5=1\,.
\end{equation*}
The toric fan of the resolved space $X$ is the cone over Fig.~\ref{f:fan1}. We denote the divisor associated to $v_i$ by $D_i$. The exceptional locus of the blow-up is reducible, with two irreducible components corresponding to $v_4$ and $v_5$.

\begin{figure}[h] 
\begin{equation}\nonumber
\begin{xy} <1.3cm,0cm>:
0*{\dir{*}}*++!D{\color{blue}v_2}="2",  (-1,-2.58)*{\dir{*}}*+!RU{\color{blue}v_1}="1"	,(1,-2.58)*{\dir{*}}*+!LU{\color{blue}v_3}="3"
,(0,-1)*{\dir{*}}="5"	,(0,-2)*{\dir{*}}="4"
\ar@{-} "1"; 0	\ar@{-} "3"; 0	\ar@{-} "1"; "3"	\ar@{-} "2";"4" 	
\ar@{-} "3"; "4"	\ar@{-}"3"; "5"	 \ar@{-} "1"; "4"	\ar@{-}"1"; "5"  	
,\POS"5"*+!L{\color{red}v_5}	,\POS"4"*++!L{\color{red}v_4}
\end{xy}
\end{equation}
  \caption{The toric fan for the resolution of the $\C^3/\Z_5$ singularity.}
  \label{f:fan1}
\end{figure}
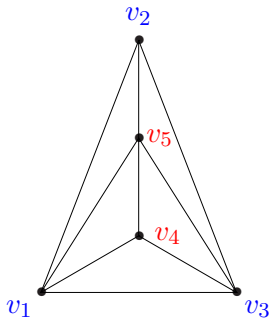

Toric geometry immediately tells us that the divisor $D_4$ is a $\P^2$, while $D_5$ is the Hirzebruch surface $\F_3$. Let $f$ be the fiber of $\F_3$ (e.g., the cone generated by $v_1$ and $v_5$), and $h$ its  $-3$ section (the cone generated by $v_4$ and $v_5$). At the same time, $h$ is the hyperplane class of $\P^2$, while $f$ does not intersect $\P^2$. The curves $h$ and $f$ are the generators of the Mori cone of effective curves. Shrinking  $h$ also shrinks the divisors $D_4$, and hence gives a Type~II contraction, while shrinking  $f$ collapses the Hirzebruch surface $\F_3$ onto its base, giving a Type~III degeneration.

\begin{figure}[h]
\begin{equation}\nonumber
\begin{xy} <3.5cm,0cm>:
{\ar (0,0);(.5,0) *+!L{(1,0)\,x_1}}
,{\ar 0;(0,.6) *+!U{(0,1)\,x_2}}
,{\ar@{-} 0;(.5,-.8) *+!LU{(1,-2)}}
,{\ar@{-} 0;(-1.2,.5) *+!R{(-3,1)}}
,(.8,.6)*+[F]\txt{\color{blue} Smooth\\ \ \color{blue} phase},
,(.3,.3)*+[o][F.]{\color{red} {\mathcal C}_1}
,(-.5,-.6)*+[F]\txt{\color{blue} $\Z_5$  phase}
,(-.3,-.3)*+[o][F.]{\color{red} {\mathcal C}_4}
,(.8,-.6)*+[F]\txt{\color{blue} $\Z_2$  phase}
,(.4,-.3)*+[o][F.]{\color{red} {\mathcal C}_2}
,(-.6,.6)*+[F]\txt{\color{blue} $\Z_3$  phase}
,(-.2,.3)*+[o][F.]{\color{red} {\mathcal C}_3}
\end{xy}
\end{equation}
  \caption{The phase structure of the $\C^3/\Z_5$ model.}
  \label{fig:Z5}
\end{figure}
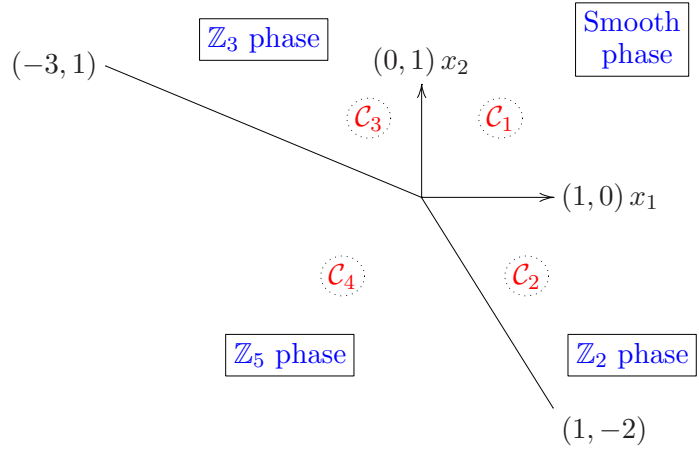

There are again four phases. The secondary fan is depicted in Fig.~\ref{fig:Z5}. The four phases are as follows: the completely resolved smooth phase; the two phases where one of the compact divisors $D_4$ or  $D_5$ has been blown up to partially resolve the $\Z_5$ fixed point; and finally the $\Z_5$ orbifold phase.

The phase corresponding to the cone ${\mathcal C}_2$ can be reached from the smooth phase ${\mathcal C}_1$ by blowing down the divisor $D_5$. This creates a line of $\Z_2$ singularities in the \CY . We will refer to this phase as {\em the $\Z_2$ phase}. Similarly, the phase ${\mathcal C}_3$ is reached by blowing down the divisor $D_4$, and creates a $\C^3/\Z_3$ singularity. We call this {\em the $\Z_3$ phase}.

The orbifold points in the moduli space are themselves singular points. The $\Z_2$ point is a  $\C^2/\Z_2$ singularity with weights $(1,-1)$, while the $\Z_3$ point the moduli space locally is of the form $\C^2/\Z_3$, with weights $(1,2)$.

We have already seen that the four maximal cones ${\mathcal C}_1$,\ldots , ${\mathcal C}_4$ in Fig.~\ref{fig:Z5} correspond to the four distinguished phase points. Similarly, the four rays correspond to curves in the moduli space, and once again are weighted projective lines: ${\mathcal L}_1,\ldots ,{\mathcal L}_4$.

\begin{figure}[h]
\begin{equation}\nonumber
\begin{xy} <3.5cm,0cm>:
{\ar@{-} (-.1,0);(1.1,0) }	,{\ar@{-} (1,.1);(1,-1.1) }	,{\ar@{-} (1.1,-1);(-.1,-1)}	,{\ar@{-} (0,.1);(0,-1.1)} 
,(1.2,.2)*\txt{\color{blue} Smooth\\ \ \color{blue} point}
,(.6,.1)*+[o][F.]{\color{red} {\mathcal L}_2}
,(1.2,-1.2)*\txt{\color{blue} $\Z_2$\\ \ \color{blue} point}
,(1.1,-.3)*+[o][F.]{\color{red} {\mathcal L}_1}
,(-.2,-1.2)*\txt{\color{blue} $\Z_5$\\ \ \color{blue} point}
,(.6,-1.1)*+[o][F.]{\color{red} {\mathcal L}_3}
,(-.2,.2)*\txt{\color{blue} $\Z_3$\\ \ \color{blue} point}
,(-.1,-.4)*+[o][F.]{\color{red} {\mathcal L}_4}
,{\ar@{-} (0.3,0.1);(0.3,-.1)   *+!U{\Delta_0}}		,{\ar@{-} (0.3,-1.1);(0.3,-.8)   *+!U{\Delta_0}}			
,{\ar@{-} (.1,-.7);(-.2,-.7)   *+!U{\Delta_0}}		,(0.7,-.6)   *+!U{\Delta_0}
,{(.8,-.35);(.8,-.65) **\crv{(1.091,-.45) & (1.05,-.55)}}
,(0,0)*{\bullet}, (0,-1)*{\bullet} 	,(1,0)*{\bullet} ,(1,-1)*{\bullet} 
	,(1,-.5)*{\bullet}
\end{xy}
\end{equation}
  \caption{The moduli space of the $\C^3/\Z_5$ model.}
  \label{fig:modsp5}
\end{figure}
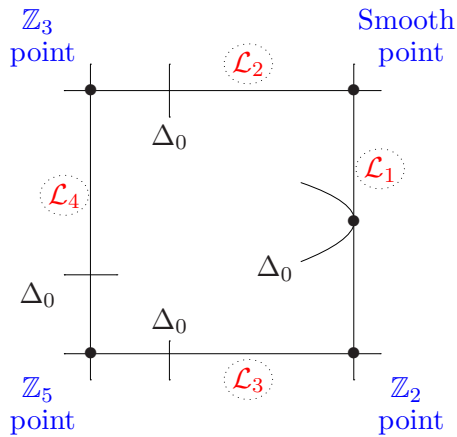

The discriminant intersects ${\mathcal L}_1$ tangentially, while it is transverse to the other ${\mathcal L}_i$'s'. We depicted this fact in Fig.~\ref{fig:modsp5} using a parabola and resp.  short segments.

It was shown in \cite{en:fracC3} that monodromy around the $\Z_2$ point inside ${\mathcal L}_1$ is \footnote{Once again, $i$ and $j$ are the embeddings, and will be omitted later on.}
\begin{equation}\label{e:m15} 
\ms{M}_{\Z_2}\, = \,
\ms{T}_{j_*\O_{D_5}(-f)} \, \comp \, \ms{T}_{j_*\O_{D_5}}  \comp \, \ms{L}_{D_2}\,,
\end{equation}
and monodromy around the $\Z_3$ point inside ${\mathcal L}_2$ is
\begin{equation}\label{e:m1z} 
\ms{M}_{\Z_3}\, = \, \ms{T}_{i_*\O_{D_4}} \comp \, \ms{L}_{D_1}  \,,
\end{equation}
while monodromy inside ${\mathcal L}_3$ around the $\Z_5$ point is given by 
\begin{equation}
\ms{M}_{\Z_5}\, = \, \ms{T}_{i_*\O_{D_4}}\comp \, \ms{M}_{\Z_2}\,.
\end{equation}

In  \cite{en:fracC3} it was conjectured that 
\begin{equation}\label{c31}
 (\ms{M}_{\Z_2})^2\, = \L{D_1}\,\quad (\ms{M}_{\Z_3})^3\, =\, \L{D_2},
\end{equation}
and the statements were checked  at the level of Chern characters. These identities were proved in \cite{AC} using Proposition~\ref{pullback}, and full exceptional collections on $\F_3$ resp. $\P^2$.  \cite{en:fracC3}  also conjectured that
\begin{prop}\label{c32}
$$\ms{M}_{\Z_5}^{\, 5}\, \cong\, \id_{\D(X)}.$$
\end{prop}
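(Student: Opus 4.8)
\medskip
\noindent\emph{Proof plan.}
The strategy is to follow the philosophy of the $\C^{\,2}/\Z_3$ calculation above: first write $\ms{M}_{\Z_5}$ as a word in the lower--order monodromies $\ms{M}_{\Z_2}$ and $\ms{M}_{\Z_3}$, and then rewrite its fifth power so that the already proved relations~(\ref{c31}) can be used repeatedly. The functor $\ms{T}_{i_*\O_{D_4}}$ appears in both $\ms{M}_{\Z_5}=\ms{T}_{i_*\O_{D_4}}\comp\ms{M}_{\Z_2}$ and $\ms{M}_{\Z_3}=\ms{T}_{i_*\O_{D_4}}\comp\L{D_1}$, so eliminating it gives
\begin{equation*}
\ms{M}_{\Z_5}\;\iso\;\ms{M}_{\Z_3}\comp\L{D_1}^{-1}\comp\ms{M}_{\Z_2}\;\iso\;\ms{M}_{\Z_3}\comp\ms{M}_{\Z_2}^{-1},
\end{equation*}
the last step using $(\ms{M}_{\Z_2})^2\iso\L{D_1}$ from~(\ref{c31}). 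This is the precise analogue of the relation~(\ref{2}) that drove the $\C^{\,2}/\Z_3$ argument, and the statement $(\ms{M}_{\Z_3}\comp\ms{M}_{\Z_2}^{-1})^5\iso\id$ is the derived--categorical counterpart of the order--five element in the $(2,3,5)$ (icosahedral) triangle group, with the large--radius monodromies $\L{D_1},\L{D_2}$ in the role played by the identity in the abstract group.

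I would then compute the fifth power directly. Writing $\ms{M}_{\Z_5}=\ms{T}_{i_*\O_{D_4}}\comp\ms{M}_{\Z_2}$ and sliding the autoequivalence $\ms{M}_{\Z_2}$ to the right by repeated use of Lemma~\ref{STcomp}, one obtains
\begin{multline*}
\ms{M}_{\Z_5}^{\,5}\;\iso\;\ms{T}_{i_*\O_{D_4}}\comp\ms{T}_{\ms{M}_{\Z_2}(i_*\O_{D_4})}\comp\ms{T}_{\ms{M}_{\Z_2}^{\,2}(i_*\O_{D_4})}\\
\comp\;\ms{T}_{\ms{M}_{\Z_2}^{\,3}(i_*\O_{D_4})}\comp\ms{T}_{\ms{M}_{\Z_2}^{\,4}(i_*\O_{D_4})}\comp\ms{M}_{\Z_2}^{\,5},
\end{multline*}
and $\ms{M}_{\Z_2}^{\,5}\iso\L{2D_1}\comp\ms{M}_{\Z_2}$ again by~(\ref{c31}). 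The five objects $\ms{M}_{\Z_2}^{\,k}(i_*\O_{D_4})$, $k=0,\dots,4$, are just line--bundle twists of the five fractional branes of $\C^{\,3}/\Z_5$: their images under the monodromies are computed in~\cite{en:fracC3}, the essential input --- the analogue of $\ms{M}_{\Z_3}(\O_{C_4})=\O_{C_3}$ from the previous subsection --- being that $\ms{M}_{\Z_5}$ cyclically permutes them. Feeding in those identifications, together with the toric restriction and intersection data on $X$ (for instance $\O_X(D_1)|_{D_4}\iso\O_{\P^2}(1)$ and the $D_i\cdot D_j$ read off from Fig.~\ref{f:fan1}, which are needed to commute the line--bundle twists past the $\ms{T}$'s via Lemma~\ref{STcomp} once more), one should be able to reorganize the composite of the five Seidel--Thomas twists into products that collapse by Proposition~\ref{pushout}: the $D_4=\P^2$ part via the exceptional collection $\O_{\P^2},\O_{\P^2}(1),\O_{\P^2}(2)$, and the $D_5=\F_3$ part via a full exceptional collection on $\F_3$. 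What remains are large--radius twists $\L{\cdot}$, which should cancel the trailing $\L{2D_1}\comp\ms{M}_{\Z_2}$ --- using $(\ms{M}_{\Z_2})^2\iso\L{D_1}$ a last time together with the toric linear equivalences among the $D_i$ --- leaving exactly $\id_{\D(X)}$.

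The hard part will be this final reorganization. In the $\C^{\,2}/\Z_3$ case it was enough to invoke~(\ref{c21}) twice and a single permutation fact; here the icosahedral structure forces a considerably longer chain of rewritings. The subtle point is that the five fractional branes of $\C^{\,3}/\Z_5$ do \emph{not} split as a disjoint union of a full exceptional collection on $\P^2$ and one on $\F_3$ --- they form instead a Mayer--Vietoris--type amalgam of the two, glued along $D_4\cap D_5\iso\P^1$ --- so making Proposition~\ref{pushout} applicable requires tracking the objects $\ms{M}_{\Z_2}^{\,k}(i_*\O_{D_4})$ through $\ms{M}_{\Z_2}$ with enough care (all shifts and twists included) that the twist functors genuinely assemble, in the correct order, into \emph{full} collections, with the residual line bundles summing to zero in $\operatorname{Pic}(X)$ and no overall shift $(-)[j]$ surviving. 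A possibly cleaner alternative is to keep everything in the form $(\ms{M}_{\Z_3}\comp\ms{M}_{\Z_2}^{-1})^5$, expand, and use Lemma~\ref{STcomp} to group the five copies of $\ms{M}_{\Z_3}$ into blocks of three and the five copies of $\ms{M}_{\Z_2}^{-1}$ into blocks of two, applying $(\ms{M}_{\Z_3})^3\iso\L{D_2}$ and $(\ms{M}_{\Z_2})^2\iso\L{D_1}$ directly and letting the intersection bookkeeping on $X$ make the resulting $\L{D_1}$'s and $\L{D_2}$'s cancel; the regrouping step is precisely where the same information about the fractional branes re-enters.
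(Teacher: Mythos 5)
Your opening reduction $\ms{M}_{\Z_5}\iso\ms{M}_{\Z_3}\comp\L{D_1}^{-1}\comp\ms{M}_{\Z_2}\iso\ms{M}_{\Z_3}\comp\ms{M}_{\Z_2}^{-1}$ is correct and is indeed the analogue of (\ref{2}), but the proof itself is never carried out, and the main route you propose has a genuine gap rather than a bookkeeping one. Expanding $\ms{M}_{\Z_5}^{\,5}\iso\T{\O_{D_4}}\comp\T{\ms{M}_{\Z_2}(\O_{D_4})}\comp\cdots\comp\T{\ms{M}_{\Z_2}^{\,4}(\O_{D_4})}\comp\ms{M}_{\Z_2}^{\,5}$ and hoping the five twist objects reassemble into full exceptional collections on $D_4\iso\P^2$ and $D_5\iso\F_3$ so that Proposition~\ref{pushout} applies is exactly the ``direct approach'' that the paper explicitly notes (end of Section~\ref{s:P4}) does \emph{not} work for Prop.~\ref{c32}: since $D_4$ and $D_5$ intersect, the objects $\ms{M}_{\Z_2}^{\,k}(\O_{D_4})$ are cones built from generally nonvanishing $\Ext$'s between sheaves supported on different divisors; they are not simply line-bundle twists of the fractional branes (the permutation fact established in \cite{en:fracC3} concerns $\ms{M}_{\Z_5}$ acting on the fractional branes, not powers of $\ms{M}_{\Z_2}$ acting on $\O_{D_4}$), and the resulting string of twists does not assemble into an exceptional collection, so Proposition~\ref{pushout} never becomes applicable. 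You flag this reorganization as ``the hard part,'' but it is precisely the point where this strategy fails, not a step one can defer.

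The paper's actual argument is arranged so that all twist objects stay supported on a single divisor, $D_5$. Using Lemma~\ref{STcomp} one gets $\ms{M}_{\Z_5}^{\,2}\iso\T{\ms{M}_{\Z_5}(\O_{D_4})}\comp\T{\O_{D_4}}\comp\ms{M}_{\Z_2}^{\,2}$, and then the two computational inputs $\ms{M}_{\Z_5}(\O_{D_4})=\O_{D_5}(f)$ (from \cite{en:fracC3}) and $\ms{M}_{\Z_2}^{\,2}\iso\L{D_1}$, together with $\T{\O_{D_4}}\comp\L{D_1}=\ms{M}_{\Z_3}$, give $\ms{M}_{\Z_5}^{\,2}\iso\T{\O_{D_5}(f)}\comp\ms{M}_{\Z_3}$; a further computation $\ms{M}_{\Z_3}(\O_{D_5}(f))=\O_{D_5}(2f)$ yields $\ms{M}_{\Z_5}^{\,4}\iso\T{\O_{D_5}(f)}\comp\T{\O_{D_5}(2f)}\comp\ms{M}_{\Z_3}^{\,2}$, the pair of twists is identified with $\L{D_1}^{\,2}\comp\ms{M}_{\Z_2}\comp\L{D_1}^{-2}\comp\L{D_2}^{-1}$, and one final use of (\ref{c31}) collapses $\ms{M}_{\Z_5}^{\,5}$ to $\id_{\D(X)}$; note that Propositions~\ref{pullback} and \ref{pushout} enter only through the already established relations (\ref{c31}), never directly. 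Your alternative suggestion of regrouping $(\ms{M}_{\Z_3}\comp\ms{M}_{\Z_2}^{-1})^5$ into blocks of three $\ms{M}_{\Z_3}$'s and two $\ms{M}_{\Z_2}$'s is the right general direction, but to complete it you must actually supply the conjugation data above (which objects the twists get transported to, e.g., $\O_{D_5}(f)$ and $\O_{D_5}(2f)$); without those computations the regrouping cannot be closed.
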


\begin{proof}
We prove the statement using (\ref{c31}), and  the philosophy of the previous subsection. For this we first rewrite  $\ms{M}_{\Z_5}^{\, 2}$ in a more convenient form using Lemma~\ref{STcomp}:
\begin{equation}
 \ms{M}_{\Z_5}^{\, 2} = \ms{M}_{\Z_5} \comp (\T{\O_{D_4}} \comp \ms{M}_{\Z_2})
\iso   \T{\ms{M}_{\Z_5}(\O_{D_4})}\comp\ms{M}_{\Z_5} \comp  \ms{M}_{\Z_2}=\T{\ms{M}_{\Z_5}(\O_{D_4})}\comp\T{\O_{D_4}}\comp \ms{M}_{\Z_2}^{\, 2}.
\end{equation}
But $\ms{M}_{\Z_2}^{\, 2}\iso\L{D_1}$ by (\ref{c31}), while $\T{\O_{D_4}}  \comp \L{D_1}=\ms{M}_{\Z_3}$. Furthermore, Sec. 4.1.3 of \cite{en:fracC3} proves that $\ms{M}_{\Z_5}(\O_{D_4})=\O_{D_5}(f)$. (As shown in \cite{en:fracC3}, both are fractional branes, and this is just the statement that the $\Z_5$ monodromy permutes the fractional branes.) Therefore
\begin{equation}\label{3}
 \ms{M}_{\Z_5}^{\, 2} \iso \T{\O_{D_5}(f)}\comp \ms{M}_{\Z_3}.
\end{equation}
A short computation shows that $\ms{M}_{\Z_3}(\O_{D_5}(f))=\O_{D_5}(2f)$. Using this fact, (\ref{3}) and  Lemma~\ref{STcomp}, we have that
\begin{equation*}
 \ms{M}_{\Z_5}^{\, 4} \iso \T{\O_{D_5}(f)}\comp \T{\O_{D_5}(2f)}\comp \ms{M}_{\Z_3}^{\, 2}.
\end{equation*}
On the other hand it is easy to verify that
\begin{equation*}
\T{\O_{D_5}(f)}\comp \T{\O_{D_5}(2f)} \iso \L{D_1}^{\, 2}\comp \ms{M}_{\Z_2} \comp\L{D_1}^{-2}\comp\L{D_2}^{-1}.
\end{equation*}
Consequently,
\begin{equation*}
\begin{split}
 \ms{M}_{\Z_5}^{\, 5}
&= (\T{\O_{D_4}} \comp \ms{M}_{\Z_2})\comp
	(\L{D_1}^{\, 2}\comp \ms{M}_{\Z_2} \comp\L{D_1}^{-2}\comp\L{D_2}^{-1})\comp \ms{M}_{\Z_3}^{\, 2}\\
&\iso \T{\O_{D_4}} \comp
	(\ms{M}_{\Z_2}\comp \L{D_1}^{\, 2}\comp \ms{M}_{\Z_2}) \comp (\L{D_1}^{-2}\comp\L{D_2}^{-1})\comp \ms{M}_{\Z_3}^{\, 2}\\
& \iso \T{\O_{D_4}} \comp \L{D_1}^{\, 3}\comp (\L{D_1}^{-2}\comp\L{D_2}^{-1})\comp \ms{M}_{\Z_3}^{\, 2}\\
&= (\T{\O_{D_4}} \comp \L{D_1})\comp\L{D_2}^{-1}\comp \ms{M}_{\Z_3}^{\, 2}
=\ms{M}_{\Z_3}\comp\L{D_2}^{-1}\comp \ms{M}_{\Z_3}^{\, 2}\iso\id_{\D(X)}.
\end{split}
\end{equation*}
To go from line one to line two we used the fact that  composition of Fourier-Mukai functors is associative. To go to line three we used the first relation in (\ref{c31}). In the last line we used (\ref{e:m1z}), and the second relation in (\ref{c31}).
\end{proof}

\subsection{A compact example: \texorpdfstring{$\P^4_{9,6,1,1,1}[18]$}{P4{9,6,1,1,1}[18]}}    \label{s:P4}

The degree $18$ hypersurface in the weighted projective space $\P^4_{9,6,1,1,1}$ is a well studied example of a \CY\ 3-fold with two dimensional complexified Kahler moduli space. We will follow the notation of \cite{navigation}, where the relevant identities that are the subject of this paper were partly checked at K-theory level. We will lift these identities to the derived category, and prove them. This example is particularly interesting, since both Proposition~\ref{pullback} and \ref{pushout} will be needed in the proof, while the examples studied so far used only one of them per example.

First we review the geometry of the blown-up $\P^4_{9,6,1,1,1}$, which we call  $Z$, as the details will be important in the sequel. $Z$  is particularly easy to describe torically: its fan has the same rays, $v_1,\ldots,v_5$, as $\P^4_{9,6,1,1,1}$, and one additional ray corresponding to the blow-up,  $v_6$:
\begin{equation}\label{x2}
 \begin{split}
  &v_1=(1,0,0,0),\quad v_2=(0,1,0,0),\quad v_3=(0,0,1,0)\\
  &v_4=(0,0,0,1),\quad v_5=(-9,-6,-1,-1),\quad v_6=(-3,-2,0,0).
 \end{split}
\end{equation}

The weighted projective space  $\P^4_{9,6,1,1,1}$ has a curve of $\Z_3$ singularities, located at the points $[z_1 ,z_2 ,0,0,0]$, where $[z_1 , \ldots , z_5 ]$ are homogeneous coordinates  on $\P^4_{9,6,1,1,1}$. Locally these are of the form $\C^3/\Z_3$. We blow up this curve by introducing the ray $v_6$, which satisfies the additional relation
\begin{equation}\label{x1}
 v_ 3+v_4+v_5 -3 v_6 = 0.
\end{equation}

Let $d_i$ denote the divisor corresponding to the ray $v_i$. It is clear from (\ref{x2}) that we have the following linear equivalence relations
\begin{equation}\label{e:lr1}
 d_1\sim 3h,\quad d_2\sim 2h,\quad d_3\sim d_4\sim d_5\sim l,\quad d_6\sim e,\quad
e \sim h - 3l.
\end{equation}

The generic degree $18$ hypersurface in  $\P^4_{9,6,1,1,1}$ intersects the curve of $\Z_3$ singularities in one point, and hence is  singular. Blowing up  $\P^4_{9,6,1,1,1}$ resolves the singularity of the hypersurface as well.
This is achieved torically by considering a generic anti-canonical hypersurface $X$ in $Z$, the  blow-up of $\P^4_{9,6,1,1,1}$ discussed before. It follows from (\ref{x2}) that in the notation of (\ref{e:lr1}) $-K_Z = 6h$, thus $X$ is a generic element in the linear system $ |6h|$.
The exceptional divisor $e$ of $Z$ intersects $X$ in a $\P^2$, which we call $E$, and $X$ is elliptically fibered over $E$. Let $H$ resp. $L$ denote the restriction of the divisors $h$ resp. $l$ of $Z$ to $X$.
As a consequence of (\ref{e:lr1}) and the fact that  $-K_Z = 6h$, on $X$ we have the linear equivalence relations
\begin{equation}\label{e:lr}
 E \sim H - 3L,\quad X\sim 6H.
\end{equation}

This model has four phases (depicted in Fig.~\ref{f:modP}):\footnote{See Section~5 of  \cite{navigation} for more details.}
\begin{enumerate}
 \item the smooth \CY\ phase.
\item an orbifold phase, whose limit point has the orbifold singularity $\C^3/\Z_3$, but the \CY\ has infinite volume.
\item a $\P^2$ phase, where the elliptic fibration $X$ collapses onto its base $\P^2$. In the limit, this elliptic fiber has zero area and the $\P^2$ has infinite volume.
\item a Landau-Ginzburg (LG) phase with the Gepner point as the limit point.
\end{enumerate}

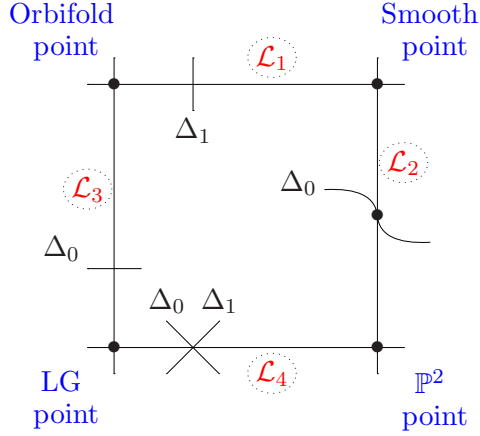
\begin{figure}[h]
\begin{equation}\nonumber
\begin{xy} <3.5cm,0cm>:
{\ar@{-} (-.1,0);(1.1,0) }	,{\ar@{-} (1,.1);(1,-1.1) }	,{\ar@{-} (1.1,-1);(-.1,-1)}	,{\ar@{-} (0,.1);(0,-1.1)} 
,(1.2,.2)*\txt{\color{blue} Smooth\\ \ \color{blue} point}
,(.6,.1)*+[o][F.]{\color{red} {\mathcal L}_1}
,(1.2,-1.2)*\txt{\color{blue} $\P^2$ \\ \ \color{blue} point}
,(1.1,-.3)*+[o][F.]{\color{red} {\mathcal L}_2}
,(-.2,-1.2)*\txt{\color{blue} LG\\ \ \color{blue} point}
,(.6,-1.1)*+[o][F.]{\color{red} {\mathcal L}_4}
,(-.2,.2)*\txt{\color{blue} Orbifold\\ \ \color{blue} point}
,(-.1,-.4)*+[o][F.]{\color{red} {\mathcal L}_3}
,{\ar@{-} (0.3,0.1);(0.3,-.1)   *+!U{\Delta_1}}
,{\ar@{-} (0.2,-1.1);(0.4,-.9)   *+!D{\Delta_1}}		,{\ar@{-} (0.4,-1.1);(0.2,-.9)   *+!D{\Delta_0}}
,{\ar@{-} (.1,-.7);(-.2,-.7)   *+!D{\Delta_0}}		,(0.7,-.3)   *+!U{\Delta_0}
,{(  .8,-.4);(1,-.5) **\crv{(.99,-.39)}}
,{(1.2,-.6);(1,-.5) **\crv{(1.01,-.61)}}
,(0,0)*{\bullet}, (0,-1)*{\bullet} 	,(1,0)*{\bullet} ,(1,-1)*{\bullet} 
,(1,-.5)*{\bullet}	
\end{xy}
\end{equation}
  \caption{The moduli space of the $\P^4_{9,6,1,1,1}[18]$ model.}
  \label{f:modP}
\end{figure}

The discriminant locus of singular CFT's is reducible, with irreducible components $\Delta_0$ and $\Delta_1$. These intersect the four lines of interest in the way depicted in Fig.~\ref{f:modP}: the intersections at ${\mathcal L}_1$ and ${\mathcal L}_3$ are simple transverse; $\Delta_0$ and $\Delta_1 $ meet  ${\mathcal L}_4$ at the same point; while ${\mathcal L}_2$ and $\Delta_0$ meet at third order.

We have three interesting monodromy identities to consider: at the orbifold point, at the $\P^2$ point, and at the LG point. We start with the orbifold point.

\subsubsection{Monodromy around the orbifold point}

It was argued in  \cite{navigation} that monodromy inside ${\mathcal L}_1$ around the orbifold point is given by
\begin{equation}\label{n:1}
 \ms{M}_{1}=\T{i_*\O_E} \comp \L{L},
\end{equation}
where $i$ is the embedding of the exceptional divisor $i\colon E\hookrightarrow X$. Guided by the $\Z_3$ quantum symmetry of the $\C^3/\Z_3$ orbifold, which is created by blowing down $E$, \cite{navigation} shows that
\begin{equation*}
 \ch\left( (\ms{M}_{1})^3(\O_X)\right)=e^H.
\end{equation*}
Now we will show that indeed
\begin{prop}\label{n:8}
 $$(\ms{M}_{1})^3\iso \L{H}.$$
\end{prop}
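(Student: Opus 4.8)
The plan is to mimic the strategy used for $(\ms{M}_{\Z_3})^3$ and $(\ms{M}_{\Z_5})^5$ in the previous subsections: write $\ms{M}_1 = \T{\O_E}\comp\L{L}$, take the cube, and repeatedly commute the twist functors past the line-bundle functors using Lemma~\ref{STcomp}, so that each twist $\T{\O_E}$ gets replaced by $\T{\ms{M}_1^k(\O_E)}$. The key geometric input will be to identify the objects $\ms{M}_1^k(\O_E)$ for $k=1,2$; by analogy with the fractional-brane permutation in the orbifold examples, I expect $\ms{M}_1$ to act on the structure sheaf of $E\cong\P^2$ by twisting it by the restriction of $L$ (or some combination of $H$ and $L$ restricted to $E$), so that $\ms{M}_1(\O_E)\cong\O_E(L)$ and $\ms{M}_1^2(\O_E)\cong\O_E(2L)$, up to shifts. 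This computation is essentially the one already referenced from \cite{navigation} at Chern-character level; lifting it to $\D(X)$ is a finite check using the definition $\ms{T}_{\O_E}(\mathsf B)=\Cone(\Hom(\O_E,\mathsf B)\otimes\O_E\to\mathsf B)$ together with $\L{L}(\O_E)=\O_E(L)$.

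Once those identifications are in hand, the computation proceeds formally: expanding
\[
\ms{M}_1^3 = (\T{\O_E}\comp\L{L})^3 \iso \T{\O_E}\comp\T{\ms{M}_1(\O_E)}\comp\T{\ms{M}_1^2(\O_E)}\comp\L{L}^3,
\]
the middle three twists should assemble into a product of the form $\T{\O_E}\comp\T{\O_E(L)}\comp\T{\O_E(2L)}$. Here is where Proposition~\ref{pushout} enters: $E\hookrightarrow X$ is a $\P^2$ inside the Calabi--Yau $X$, so $j^*\omega_X\iso\O_E$ automatically, and $\{\O_E,\O_E(L),\O_E(2L)\}$ (i.e.\ $\{\O_{\P^2},\O_{\P^2}(1),\O_{\P^2}(2)\}$ under $H|_E = L|_E$, using $E\sim H-3L$ so that $h|_E$ and $l|_E$ agree appropriately) is the standard full exceptional collection on $\P^2$. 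Proposition~\ref{pushout} then gives $\T{\O_E}\comp\T{\O_E(L)}\comp\T{\O_E(2L)}\iso\L{\O_X(E)}=\L{E}=\L{H-3L}$, using the linear equivalence $E\sim H-3L$ from (\ref{e:lr}). Proposition~\ref{pullback} may also be invoked in an auxiliary way to handle the ambient blow-up $Z$ and pin down how the collection on $E$ sits, but the main hammer is Proposition~\ref{pushout}.

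Assembling the pieces, $\ms{M}_1^3\iso\L{H-3L}\comp\L{L}^3\iso\L{H}$, since tensoring with line bundles composes additively in the divisor: $\L{D}\comp\L{D'}=\L{D+D'}$. That would complete the proof. The shifts $[2]$ that appear in Propositions~\ref{pullback} and~\ref{pushout} (and in the spherical-twist relations) need to be tracked carefully — I expect the $[2]$ from $\L{\O_X(Y)}$ being twist-free (no shift, unlike the hypersurface case of Prop.~\ref{pullback}) to be exactly what makes the answer come out as $\L{H}$ with no residual shift, consistent with the K-theory check $\ch(\ms{M}_1^3(\O_X))=e^H$ in \cite{navigation}.

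The main obstacle I anticipate is the explicit identification of $\ms{M}_1(\O_E)$ and $\ms{M}_1^2(\O_E)$ as honest objects in $\D(X)$ (not just in K-theory), including getting the shifts right, and verifying that the relevant line bundles $\O_E, \O_E(L), \O_E(2L)$ really do form the standard exceptional collection on $E\cong\P^2$ with the grading needed to apply Proposition~\ref{pushout}. The commuting-past-line-bundles bookkeeping via Lemma~\ref{STcomp} is routine once the action on $\O_E$ is known, exactly as in the $\C^2/\Z_3$ and $\C^3/\Z_5$ cases; the genuinely new subtlety here, flagged already in the text, is that this is the first example where one must feed a \emph{pushout}-type collection (on a divisor in the CY) rather than a \emph{pullback}-type one into the machine.
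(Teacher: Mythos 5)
Your overall route is the same as the paper's: expand $(\T{i_*\O_E}\comp\L{L})^3$ with Lemma~\ref{STcomp}, recognize the three resulting twists as the pushforward of the standard full exceptional collection $\O,\O(1),\O(2)$ on $E\iso\P^2$, apply Proposition~\ref{pushout} to get $\L{\O_X(E)}$, and finish with $\L{E}\comp\L{L}^3=\L{E+3L}=\L{H}$ via $E\sim H-3L$. The final assembly and the realization that this is a \emph{pushout}-type (divisor-in-CY) application are exactly right.

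However, the middle step as you wrote it is not what Lemma~\ref{STcomp} produces, and the ``key geometric input'' you propose is false. Moving the line-bundle functors to the right conjugates the twists by powers of $\L{L}$ only, giving
$(\ms{M}_1)^3\iso\T{i_*\O_E}\comp\T{\L{L}(i_*\O_E)}\comp\T{\L{L}^2(i_*\O_E)}\comp\L{L}^3$;
the objects appearing are $\L{L}^k(i_*\O_E)=i_*\O_E(k)$ (projection formula plus $L|_E=$ hyperplane class), and no computation of $\ms{M}_1^k(\O_E)$ is needed at all. If instead you conjugate by $\ms{M}_1$ itself, the twists come out in the \emph{reverse} order, $\T{\ms{M}_1^2(\O_E)}\comp\T{\ms{M}_1(\O_E)}\comp\T{\O_E}\comp\L{L}^3$, so your displayed formula $\T{\O_E}\comp\T{\ms{M}_1(\O_E)}\comp\T{\ms{M}_1^2(\O_E)}\comp\L{L}^3$ matches neither valid expansion. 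Moreover $\ms{M}_1(\O_E)\not\iso\O_E(L)$ even up to shift: since $\Hom^\bullet_{\D(X)}(i_*\O_E,i_*\O_E(1))\iso\C^3$ in degree $0$, one finds $\ms{M}_1(i_*\O_E)=\T{i_*\O_E}(i_*\O_E(1))\iso i_*\Omega_{\P^2}(1)[1]$ (the next fractional brane of the local $\C^3/\Z_3$, consistent with the permutation picture, but a genuinely different object). The ``finite check'' you defer would therefore refute your identification rather than confirm it; your two slips happen to cancel, landing you on the correct collection $\O_E,\O_E(1),\O_E(2)$, but the argument as stated has a wrong step. A smaller point: $H|_E$ is \emph{not} $L|_E$ — from $E\sim H-3L$ and $E|_E=K_{\P^2}$ one gets $H|_E\iso\O_E$ trivial, while $L|_E$ is the hyperplane class; only the latter is used, so this does not affect the conclusion, but the parenthetical justification should be corrected. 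With the conjugation done by $\L{L}$ as above, the rest of your argument is the paper's proof.
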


\begin{proof}
We start by rewriting $(\ms{M}_{1})^3$ in a more convenient form using the definition (\ref{n:1}) and Lemma~\ref{STcomp}:
\begin{equation}\label{n:2}
\begin{split}
 (\ms{M}_{1})^3=&\,(\T{i_*\O_E} \comp \L{L}) \comp (\T{i_*\O_E} \comp \L{L}) \comp (\T{i_*\O_E} \comp \L{L})\\
\iso &\, \T{i_*\O_E} \comp \T{\L{L}(i_*\O_E)} \comp \T{\L{L}^2(i_*\O_E)} \comp \L{L}^3.
\end{split}
\end{equation}
But $\L{L}(i_*\O_E)=i_*(\O_E(i^*L))$, and it is easy to see,  particularly in the toric presentation, that the divisor $L$ restricts to the hyperplane divisor on $E\iso \P^2$, i.e., $\L{L}(i_*\O_E)=i_*\O_E(1)$; and similarly $\L{L}^2(i_*\O_E)=i_*\O_E(2)$. But $\O,\O(1),\O(2)$ is a full exceptional collection on $\P^2$. Using Prop.~\ref{pushout}, eq.~(\ref{n:2}) becomes
\begin{equation*}
 (\ms{M}_{1})^3
\iso \L{\O_X(E)} \comp  \L{L}^3=\L{E+3L}=\L{H}.
\end{equation*}
In the last equality we used the linear equivalence (\ref{e:lr}).
\end{proof}

\subsubsection{Monodromy around the \texorpdfstring{$\P^2$}{P2} point}

It was shown in  \cite{navigation} that monodromy inside ${\mathcal L}_2$ around the $\P^2$ point is given by
\begin{equation}\label{n:4}
 \ms{M}_2=\L{H}\comp \L{L}^{-2}\comp \T{\O_X} \comp \L{L}\comp \T{\O_X} \comp \L{L}\comp \T{\O_X} .
\end{equation}
Passing from the \CY\ point to the $\P^2$ point represents collapsing a large radius elliptic fiber to an LG orbifold theory, which is a $\Z_6$-orbifold, and thus has a $\Z_6$ quantum symmetry. This motivated \cite{navigation} to prove that
\begin{equation*}
 \ch\left( (\ms{M}_2)^6(\O_X)\right)=1.
\end{equation*}
Lifting this to the derived category, we have the following:

\begin{prop}\label{p:excol}
$$\left( \ms{M}_2\right)^6 \iso (-)[2].$$
\end{prop}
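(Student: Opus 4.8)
The plan is to mimic the proof strategy of Proposition~\ref{c32} and Proposition~\ref{n:8}: rewrite $\left(\ms{M}_2\right)^6$ by repeatedly applying Lemma~\ref{STcomp} to move the twist functors to the left and accumulate line-bundle twists, then invoke Proposition~\ref{pullback} to collapse a block of six twist functors $\T{\O_X}\comp\T{\O_X(1)}\comp\cdots$ into a single $\L{\O_X(-X)[2]}$. The key structural fact is that $X$ is the anticanonical $|6H|$ hypersurface in the toric fourfold $Z$, so $\omega_Z\iso\O_Z(-X)$, and $\P^4_{9,6,1,1,1}$-type reasoning (or directly the blow-up $Z$) carries a full exceptional collection $\O_Z,\O_Z(H),\dots$ whose restriction to $X$ gives exactly the objects $\O_X,\O_X(H),\O_X(2H),\ldots$ needed for Proposition~\ref{pullback}. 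The shift $(-)[2]$ on the right-hand side is precisely the $[2]$ appearing in $\L{\O_X(-X)[2]}$ once one checks that the residual line-bundle twists cancel using the linear equivalence $X\sim 6H$ from~(\ref{e:lr}).

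Concretely, I would first unpack the definition~(\ref{n:4}) and compute $\ms{M}_2^{\,2}$, using Lemma~\ref{STcomp} to conjugate each $\T{\O_X}$ past the line bundles $\L{H}\comp\L{L}^{-2}\comp\L{L}$ that precede it; since $\L{D}(\O_X)\iso\O_X(D)$, each conjugation replaces a $\T{\O_X}$ by $\T{\O_X(D)}$ for the appropriate cumulative divisor $D$. Iterating this through the sixth power, all twist functors get pushed to the left and one is left with an expression of the shape $\T{\O_X(a_1 H)}\comp\cdots\comp\T{\O_X(a_6 H)}\comp(\text{line bundle twist})$, where the $a_k$ form a consecutive run (after absorbing the $\L{L}$'s, which restrict to hyperplane twists compatible with the elliptic fibration structure — this is the $\P^2$-base bookkeeping, parallel to the $\O_E(k)$ computation in Proposition~\ref{n:8}). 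I would then check that this run is exactly $\O_X,\O_X(H),\dots,\O_X(5H)$, i.e.\ the restriction of the full exceptional collection $\O_Z(kH)$, $k=0,\dots,5$, on $Z$, which is full because $Z$ is a toric variety of Picard rank $2$ whose Cox-ring data make $\{\O_Z(kH)\}_{k=0}^{5}$ Beilinson-type (this is where one leans on Canonaco's construction~\cite{AC} and the $\sum w_i = 18 = 6\cdot 3$ structure, reduced by the blow-up).

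Applying Proposition~\ref{pullback} with $i\colon X\mono Z$ and $\omega_Z\iso\O_Z(-X)$ then converts the block of six twists into $\L{\O_X(-X)[2]} = \L{\O_X(-6H)}\comp(-)[2]$, using $X\sim 6H$. The final step is purely linear-algebra in $\mathrm{Pic}(X)$: collect the leftover line-bundle twist accumulated during the conjugations plus this $\L{\O_X(-6H)}$ and verify the total divisor is trivial, so that $\left(\ms{M}_2\right)^6\iso(-)[2]$. The main obstacle I anticipate is not any single deep input but the careful accounting of the divisor classes through six rounds of Lemma~\ref{STcomp}: one must track how each $\L{H}$, $\L{L}^{-2}$, $\L{L}$ interacts with the $\T{\O_X}$'s and confirm that the six resulting twist kernels are genuinely $\O_X(kH)$ for $k=0,\ldots,5$ — in particular that the $L$-twists conspire (via $E\sim H-3L$) to shift the collection uniformly rather than spoiling its exceptionality. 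A secondary point requiring care is justifying that the relevant exceptional collection lives on the \emph{blow-up} $Z$ (not on $\P^4_{9,6,1,1,1}$ itself, which is only a stack) and that its members restrict to $X$ as claimed; this is exactly the kind of toric computation already invoked implicitly in Proposition~\ref{n:8} for the sub-collection on $E\iso\P^2$, so the machinery is in place.
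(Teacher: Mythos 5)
There is a genuine gap, and it sits exactly where the real work of the paper's proof lies. Your bookkeeping of the twists is wrong: by (\ref{n:4}), a single $\ms{M}_2$ already contains \emph{three} Seidel--Thomas twists, so after conjugating with Lemma~\ref{STcomp} the sixth power produces \emph{eighteen} twist functors, not six, and their kernels involve $L$ as well as $H$: one gets $\T{\O_X(kH-2L)}\comp\T{\O_X(kH-L)}\comp\T{\O_X(kH)}$ for $k=0,\ldots,5$, sandwiched between $\L{H}$ and $\L{H}^5$. The $L$-twists do not ``conspire to shift the collection uniformly''; they supply the $-L$, $-2L$ offsets inside each triple. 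Consequently the exceptional collection you must feed into Prop.~\ref{pullback} is the $18$-term collection $\O_Z(kh-2l),\O_Z(kh-l),\O_Z(kh)$, $k=0,\ldots,5$, on the blow-up $Z$, not $\{\O_Z(kH)\}_{k=0}^{5}$. Your claim that the latter six line bundles form a \emph{full} exceptional collection on $Z$ is false: $Z$ is a four-dimensional toric stack whose Grothendieck group has rank far larger than six (the paper's collection has $18$ members, matching $\O,\O(1),\ldots,\O(17)$ on the stack $\P^4_{9,6,1,1,1}$), so six line bundles cannot generate $\D(Z)$, and without fullness Prop.~\ref{pullback} simply does not apply.

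The missing ingredient is precisely the paper's Lemma~\ref{exColl}: one must prove that the $18$-term collection above is full and exceptional on $Z$, which is done not by a Picard-rank-$2$ Beilinson argument but by Kawamata's blow-up comparison — the fully faithful functor $\Phi\colon\D(\P^4_{9,6,1,1,1})\to\D(Z)$ sends $\O(i)$, $i=0,\ldots,17$, to $\O_Z(il+\lfloor i/3\rfloor e)$, which via $e\sim h-3l$ reproduces exactly the $18$ bundles needed, and fullness follows from Kawamata's Theorem~5.2(1). Your closing steps (restricting along $j\colon X\mono Z$ with $\omega_Z\iso\O_Z(-X)$, then cancelling $\L{H}\comp\L{\O_X(-X)}\comp\L{H}^5=\L{6H-X}=\id$ so only the shift $[2]$ survives) do match the paper, but as written the argument would collapse at the application of Prop.~\ref{pullback} because the collection you propose is neither the one produced by the conjugations nor a full one.
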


\begin{proof}
We first rewrite $\ms{M}_2$ using the definition (\ref{n:4}) and Lemma~\ref{STcomp}:
\begin{equation*}
\ms{M}_2=\L{H} \comp \T{\O_X(-2L)} \comp \T{\O_X(-L)} \comp \T{\O_X} .
\end{equation*}
Then once again using Lemma~\ref{STcomp}
\begin{equation}\label{n:5}
\begin{split}
(\ms{M}_2)^6=\L{H} \comp &\left(\T{\O_X(-2L)} \comp \T{\O_X(-L)} \comp \T{\O_X}\right)
\comp \left(\T{\O_X(H-2L)} \comp \T{\O_X(H-L)} \comp \T{\O_X(H)}\right) \\
&\comp \cdots\comp \left(
\T{\O_X(5H-2L)} \comp \T{\O_X(5H-L)} \comp \T{\O_X(5H)}\right)\comp \L{H}^5 .
\end{split}
\end{equation}

Recall that $Z$ denotes the resolved weighted projective space  $\P^4_{9,6,1,1,1}$, i.e., the ambient space where $X$ is embedded as a smooth hypersurface, as discussed at the beginning of Sec.~\ref{s:P4}, and also recall the toric divisors  $l$ and $h$ on $Z$ from (\ref{e:lr1}). We have the following lemma
\begin{lemma}\label{exColl}
 $$
\O_Z(-2l),\O_Z(-l),{\O_Z},\O_Z(h-2l),\O_Z(h-l),\O_Z(h) , \cdots ,
\O_Z(5h-2l),\O_Z(5h-l),\O_Z(5h)
$$
is a full exceptional collection on $Z$.
\end{lemma}

\begin{proof}[Proof of the lemma.]
First note that both  $\P^4_{9,6,1,1,1}$ and its blow-up $Z$ are singular toric varieties, and hence we need to work with them as smooth stacks. For convenience we tensor each element of the collection by $\O_Z(2l)$, which of course is irrelevant for exceptionality. The task now is to prove that
\begin{equation}\label{e:excoll}
 \O_Z,\O_Z(l),{\O_Z(2l)},\O_Z(h),\O_Z(h+l),\O_Z(h+2l) , \cdots ,
\O_Z(5h),\O_Z(5h+l),\O_Z(5h+2l)
\end{equation}
is exceptional.

The obvious route is to compute the $\Ext$ groups by standard toric techniques available for line bundles on toric varieties, a straightforward but tedious work. Instead, we will use a recent result by Kawamata \cite{Kawamata} which we paraphrase for the convenience of the reader. Section~5 of \cite{Kawamata} considers a toric divisorial contraction $\phi\colon \c X \to \c Y$, with exceptional divisor $\c E$, where $ \c X $ and $ \c Y$ are toric stacks.
$\phi\colon \c X \to \c Y$ is also known as the blow-up map.

Let $\c E_i$ be  the prime divisors on $\c Y$ corresponding to the rays $\{v_i\}_{i=1}^n$ of the toric fan, which in turn define the toric stack $\c Y$. Since $\c X$ is a blow-up, the rays $\{v_i\}_{i=1}^n$ define prime divisors on $\c X$ as well, which we call $\c D_i$. Of course, $\c X$ has one more prime divisor, the exceptional divisor  $\c D_{n+1}=\c E$, corresponding to the additional ray $v_{n+1} $ of the blow-up. The contraction morphism is described by an equation
\begin{equation}\label{relation}
a_1v_1 + \dots + a_{n+1}v_{n+1} = 0,
\end{equation}
for integers $a_i$.

In general there is no morphism of stacks $\mathcal{X}\to \mathcal{Y}$, but there is still a fully faithful functor $\Phi: \D(\mathcal{Y}) \to \D(\mathcal{X})$. Kawamata proves the following isomorphism:
\begin{equation}\label{x4}
 \Phi\left(\mathcal{O}_{\mathcal{Y}}(\sum_{i=1}^n k_i\mathcal{E}_i)\right)
\,\cong \,\mathcal{O}_{\mathcal{X}}\left(\sum_{i=1}^{n} k_i\mathcal{D}_i + N_{\underbar{k}}\c E\right), \quad\mbox{where}\,
 N_{\underbar{k}} = \lfloor \frac{-1}{a_{n+1}} \sum_{i=1}^n  {a_ik_i} \rfloor.
\end{equation}
$\lfloor x\rfloor$ is the integer part of the rational number $x$ (the floor function).

Returning to our problem, it is known that
\begin{equation}\label{x3}
 \O, \O(1),\ldots \O(17)
\end{equation}
is a full exceptional collection on the toric stack $\P^4_{9,6,1,1,1}$. We will now show that (\ref{e:excoll}) is the image of (\ref{x3}) under $\Phi$. The exceptionality of the sequence (\ref{e:excoll}) then follows immediately, since  $\Phi $ is a fully faithful functor. For this we use (\ref{x4}) with the role of (\ref{relation}) being played by  (\ref{x1}):
\begin{equation}\label{x5}
 \Phi(\mathcal{O}_{\P^4_{9,6,1,1,1}}(i))
\,\cong \,\mathcal{O}_{Z}(i l + N_i e), \quad\mbox{where}\,
 N_i = \lfloor \frac{1}{3} i \rfloor.
\end{equation}
Running through the index set $0,1,\ldots, 17$, and 
using the linear equivalence $e \sim h - 3l$ from (\ref{e:lr1}) completes the proof of exceptionality.

The fact that (\ref{e:excoll}) is full follows from  Theorem~5.2(1) of \cite{Kawamata}.
\end{proof}

Returning to the proof of Prop.~\ref{p:excol}, in particular Eq.~(\ref{n:5}), first recall that $L$ and $H$ are the restrictions of the toric divisors  $l$ and $h$ on $Z$. Therefore,
$$
\O_X(\alpha H + \beta L) = j^* \O_Z(\alpha h + \beta l),\quad \mbox{for all $\alpha,\beta\in \Z$},
$$
where $j\colon X\to Z$ is the embedding.
Using Lemma~\ref{pullback} and Lemma~\ref{exColl},  Eq.~(\ref{n:5}) becomes
\begin{equation*}
 (\ms{M}_2)^6\iso \L{H} \comp  \L{\O_X(-X)[2]} \comp \L{H}^5 =\L{6H-X}[2]=(-)[2].
\end{equation*}
In the last equality we used again the linear equivalence (\ref{e:lr}).
\end{proof}

\subsubsection{Monodromy around the LG point}

Monodromy inside ${\mathcal L}_3$ around the LG point  was shown    to be \cite{navigation}
\begin{equation}\label{n:6}
 \ms{M}_3=\T{\O_X} \comp \ms{M}_1.
\end{equation}
The fact that the LG point is a $\Z_{18}$  orbifold motivated \cite{navigation} to prove that
\begin{equation*}
 \ch\left( (\ms{M}_3)^{18}(\O_X)\right)=1.
\end{equation*}
But in fact more is true:
\begin{prop}\label{m1}
 $$\left( \ms{M}_3\right)^{18}\iso (-)[2].$$
\end{prop}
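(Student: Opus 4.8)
The plan is to reduce $(\ms{M}_3)^{18}$ to the two identities already in hand, namely Prop.~\ref{n:8} ($(\ms{M}_{1})^3\iso\L{H}$) and Prop.~\ref{p:excol} ($(\ms{M}_2)^6\iso(-)[2]$), by expressing everything in terms of $\ms{M}_1$, $\ms{M}_2$, the twist $\T{\O_X}$, and the large-radius functors $\L{H}$, $\L{L}$. From \eqref{n:6} we have $\ms{M}_3=\T{\O_X}\comp\ms{M}_1$. The first step is to commute all the $\T{\O_X}$ factors to the left past the powers of $\ms{M}_1$ using Lemma~\ref{STcomp}: since $\ms{M}_1^k(\O_X)$ is a computable object (it is obtained by tensoring $\O_X$ by line bundles and taking cones against $i_*\O_E(m)$'s, exactly the kind of computation done in \cite{navigation}), one gets
\begin{equation*}
(\ms{M}_3)^{18}\iso \T{\O_X}\comp\T{\ms{M}_1(\O_X)}\comp\T{\ms{M}_1^{2}(\O_X)}\comp\cdots\comp\T{\ms{M}_1^{17}(\O_X)}\comp(\ms{M}_1)^{18}.
\end{equation*}
Because $(\ms{M}_1)^3\iso\L{H}$ commutes with line-bundle twists, $\ms{M}_1^{3q+r}(\O_X)\iso\ms{M}_1^{r}(\O_X)\otimes\O_X(qH)$, so the eighteen twist factors organize into six blocks of three, each block being a triple $\T{\O_X(qH)},\T{\ms{M}_1(\O_X)\otimes\O_X(qH)},\T{\ms{M}_1^2(\O_X)\otimes\O_X(qH)}$ for $q=0,\dots,5$, and $(\ms{M}_1)^{18}\iso\L{H}^6$.

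The second step is to identify each block. The key point, to be extracted from the structure in Sec.~\ref{s:P4} and \cite{navigation}, is that the triple $\O_X,\ms{M}_1(\O_X),\ms{M}_1^2(\O_X)$ — equivalently, after untwisting, a triple of the form $\O_X,\ \T{\O_X}^{-1}$-type relatives — is, up to the $\L{L}$-twists built into $\ms{M}_1=\T{i_*\O_E}\comp\L{L}$, exactly assembled so that $\T{\O_X}\comp\T{\ms{M}_1(\O_X)}\comp\T{\ms{M}_1^2(\O_X)}$ reproduces one power of $\ms{M}_2$ up to explicit line-bundle corrections; indeed comparing with the rewritten form $\ms{M}_2=\L{H}\comp\T{\O_X(-2L)}\comp\T{\O_X(-L)}\comp\T{\O_X}$ from the proof of Prop.~\ref{p:excol} suggests that each block equals $\L{H}^{-1}\comp(\text{conjugate of }\ms{M}_2\text{ by }\L{H}^{q})$ times a line bundle. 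Alternatively — and this is the cleaner route — I would avoid reconstructing $\ms{M}_2$ block-by-block and instead recognize the whole string of eighteen twists directly: after the commutations and after using $(\ms{M}_1)^3\iso\L{H}$, the eighteen objects $\ms{M}_1^{k}(\O_X)$ for $k=0,\dots,17$ should be, up to shifts, precisely the pullbacks $j^*$ of (a shift/tensor of) the eighteen-term full exceptional collection \eqref{x3} on $\P^4_{9,6,1,1,1}$, hence of the collection in Lemma~\ref{exColl} on $Z$. Granting that, the composite $\T{\ms{M}_1^{0}(\O_X)}\comp\cdots\comp\T{\ms{M}_1^{17}(\O_X)}$ is exactly the composition of twists along the $j^*$ of a full exceptional collection on $Z$, so Prop.~\ref{pullback} (with $\omega_Z\iso\O_Z(-X)$, since $-K_Z=6h$ and $X\in|6h|$) gives it as $\L{\O_X(-X)[2]}$.

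The third step is bookkeeping: collect the overall line-bundle twists. One has the leftover $\T{\O_X}$ at the front and $(\ms{M}_1)^{18}\iso\L{H}^6$ at the back, plus whatever line bundles $\O_X(c_kH+d_kL)$ were generated when matching $\ms{M}_1^k(\O_X)$ to the exceptional collection; these must be absorbed so that the input to Prop.~\ref{pullback} is honestly the $j^*$ of \eqref{x3}. Using $E\sim H-3L$ and $X\sim 6H$ from \eqref{e:lr}, together with $j^*\O_Z(\alpha h+\beta l)=\O_X(\alpha H+\beta L)$, everything should collapse to
\begin{equation*}
(\ms{M}_3)^{18}\iso (\text{line bundle})\comp\L{\O_X(-X)[2]}\comp(\text{line bundle})\iso\L{\O_X(mH+nL)}[2],
\end{equation*}
and the point is that $mH+nL\sim 0$, giving $(-)[2]$. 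I would double-check the exponent $m=6-6=0$ heuristically from the Chern-character statement $\ch((\ms{M}_3)^{18}(\O_X))=1$ already proved in \cite{navigation}, which forces the line-bundle twist to be trivial once the shift $[2]$ is accounted for. The main obstacle is Step 2: verifying that the objects $\ms{M}_1^{k}(\O_X)$, $k=0,\dots,17$, really do form (up to an overall shift and a common line-bundle twist) the $j^*$-image of the full exceptional collection on $\P^4_{9,6,1,1,1}$ — this is where one must actually use the detailed geometry of $E\iso\P^2\subset X$, the restriction $L|_E=\O_{\P^2}(1)$, Kawamata's functor $\Phi$ and the isomorphism \eqref{x5}, rather than formal manipulation; once that identification is in place, Prop.~\ref{pullback} and the linear-equivalence arithmetic finish the proof.
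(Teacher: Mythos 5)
Your outline follows the same strategy as the paper's proof: commute the $\T{\O_X}$ factors to the left with Lemma~\ref{STcomp}, use Prop.~\ref{n:8} to reduce $\ms{M}_1^{18}$ to $\L{H}^6$, recognize the resulting string of eighteen twists as the twists along $j^*$ of the full exceptional collection of Lemma~\ref{exColl}, and finish with Prop.~\ref{pullback} and $X\sim 6H$. However, the step you explicitly defer --- your ``Step 2'', the identification of the objects $\ms{M}_1^{k}(\O_X)$, $k=0,\dots,17$, with the line bundles $\O_X(qH+rL)$ pulled back from $Z$ --- is precisely the substantive content of the proof, and leaving it at ``granting that'' is a genuine gap. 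It is filled not by the Kawamata machinery you anticipate (that is only used inside the proof of Lemma~\ref{exColl}, which is already available), but by a short adjunction computation: for the divisor embedding $i\colon E\mono X$ with $X$ Calabi--Yau one has $\Hom^{\bullet}_{\D(X)}(i_*\O_E,\O_X(kL))\iso \H^{\bullet}(\P^2,\O_{\P^2}(k-3))$, which vanishes for $k=1,2$; hence $\T{i_*\O_E}$ fixes $\O_X(L)$ and $\O_X(2L)$, so $\ms{M}_1(\O_X)=\O_X(L)$ and $\ms{M}_1^2(\O_X)=\O_X(2L)$ on the nose --- no shifts and no residual line-bundle twists. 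With this, $(\ms{M}_3)^3\iso\T{\O_X}\comp\T{\O_X(L)}\comp\T{\O_X(2L)}\comp\L{H}$, and iterating (moving $\L{H}$ past twists of line bundles only retwists them) yields exactly the eighteen twists $\T{\O_X(qH+rL)}$, $q=0,\dots,5$, $r=0,1,2$, i.e., the $j^*$ of the collection (\ref{e:excoll}). Your proposed fallback --- fixing the unknown twist $\O_X(mH+nL)$ heuristically from $\ch\left((\ms{M}_3)^{18}(\O_X)\right)=1$ --- is not a substitute: a K-theory check cannot certify the derived-category identification of the eighteen objects (nor detect shifts), which is where the actual work lies.

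A further minor point: your rearrangement $\ms{M}_1^{3q+r}(\O_X)\iso\ms{M}_1^{r}(\O_X)\otimes\O_X(qH)$ tacitly assumes that $\L{H}$ commutes with $\ms{M}_1$, i.e., with $\T{i_*\O_E}$. This is true, but needs the observation that $\O_E(i^*H)\iso\O_E$ (since $E\vert_E\iso K_{\P^2}\iso\O_{\P^2}(-3)$ and $L\vert_E\iso\O_{\P^2}(1)$, so $i^*H=i^*(E+3L)$ is trivial); the paper's ordering of the manipulations --- establishing the cube first and then pushing $\L{H}$ only through twists of honest line bundles on $X$ --- avoids having to check this.
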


\begin{proof}
Using the definition (\ref{n:6}) and Lemma~\ref{STcomp} we have that
\begin{equation*}
\begin{split}
(\ms{M}_3)^3=&\,(\T{\O_X} \comp \ms{M}_1) \comp (\T{\O_X} \comp \ms{M}_1) \comp (\T{\O_X} \comp \ms{M}_1) \\
\iso & \,\T{\O_X} \comp \T{\ms{M}_1(\O_X)} \comp \T{\ms{M}_1^2(\O_X)}  \comp \ms{M}_1^3.
\end{split}
\end{equation*}
Using Prop.~\ref{n:8}, we have that
\begin{equation}\label{n:7}
(\ms{M}_3)^3\iso  \T{\O_X} \comp \T{\ms{M}_1(\O_X)} \comp \T{\ms{M}_1^2(\O_X)}  \comp \L{H}.
\end{equation}

To proceed, we need to compute $\ms{M}_1(\O_X)$ and $ \ms{M}_1^2(\O_X)$. Let's start with  $\ms{M}_1(\O_X)$. From it's definition in (\ref{n:1}), $\ms{M}_1(\O_X)=\T{i_*\O_E} (\O_X(L))$. From the definition (\ref{e:refl}) it's clear that we need to compute $\Hom_{\D(X)}(i_*\O_E,\O_X(L))$. For later convenience we compute  $\Hom_{\D(X)}(i_*\O_E,\O_X(k L))$ for all $k\in \Z$.

\begin{lemma}
$ \Hom_{\D(X)}^a(i_*\O_E,\O_X(k L)) = \H^a(\P^2,\O_{\P^2}(k-3)) $ for all $a \in \Z$, and  $k \in \Z$.
\end{lemma}

\begin{proof}[Proof of the lemma.]
First observe that $i_*\O_E=\R i_*\O_E$, and use the fact that $i^!$ is the right adjoint functor of $\R i_* $:
\begin{equation}\label{v1}
 \Hom_{\D(X)}(i_*\O_E,\O_X(L)) = \Hom_{\D(E)}(\O_E,i^! \O_X(L)).
\end{equation}
On the other hand, for $i\colon E\hookrightarrow X$ an embedding of a divisor, and $X$ a \CY ,
\begin{equation}\label{v2}
   i^! \O_X(k L) = \LL i^* \O_X(k L) \Ltensor \omega_E \iso \O_E(k-3).
\end{equation}
(see, e.g., \cite{Horj:EZ} or the Appendix of \cite{en:Ema} for some properties of $i^!$).

Since $E\iso \P^2$, (\ref{v1}) and (\ref{v2}) imply that
\begin{equation*}
 \Hom_{\D(X)}(i_*\O_E,\O_X(k L)) = \H^*(\P^2,\O_{\P^2}(k-3)).
\end{equation*}
\end{proof}

Using the lemma, and the fact that $\H^i(\P^2,\O_{\P^2}(-2))=\H^i(\P^2,\O_{\P^2}(-1))=0$ for all $i\in \Z$, we immediately see that
$$
\ms{M}_1(\O_X)=\O_X(L),\quad
 \ms{M}_1^2(\O_X)=\O_X(2L).
$$
Thus (\ref{n:7}) becomes
\begin{equation*}
(\ms{M}_3)^3\iso  \left(\T{\O_X} \comp   \T{\O_X(L)} \comp   \T{\O_X(2L)} \right)  \comp \L{H}.
\end{equation*}
Repeatedly using Lemma~\ref{STcomp} (by moving $\L{H}$ to the right) gives
\begin{equation*}
\begin{split}
(\ms{M}_3)^{18}\iso &\, \left(\T{\O_X} \comp   \T{\O_X(L)} \comp   \T{\O_X(2L)} \right)  \comp
 \left(\T{\O_X(H)} \comp   \T{\O_X(H+L)} \comp   \T{\O_X(H+2L)} \right)\\
&\comp \cdots \comp
 \left(\T{\O_X(5H)} \comp   \T{\O_X(5H+L)} \comp   \T{\O_X(5H+2L)} \right)  \comp
\L{H}^6. 
\end{split}
\end{equation*}
But this is the exceptional collection (\ref{e:excoll}) in the proof of Lemma~\ref{exColl}, and therefore Lemma~\ref{pullback} gives that
\begin{equation*}
(\ms{M}_3)^{18}\iso \L{\O_X(-X)[2]} \comp \L{H}^6 =\L{6H-X}[2]=(-)[2].
\end{equation*}
\end{proof}

Let us mention that the direct approach used in proving Prop.~\ref{m1} does not work for proving (\ref{c22}) or Prop.~\ref{c32}. The reason is that we have sheaves supported on different divisors in those cases, and the analogs of the steps in the proof of Prop.~\ref{m1} do not lead to an exceptional collection, and we are led to use a cleverer approach.

\section{Discussion}

In the light of our results, the reader will naturally ask the question of how generic are these type of results? On the physics side, given a \CY\ compactification, we expect discrete symmetries to arise at various points in moduli space, and hence the moduli space locally is an orbifold. Therefore the fundamental group of the moduli space is  non-trivial. Choosing a presentation, the generators of this group will satisfy certain relations. D-branes, through their monodromy, translate these relations into relations between the associated autoequivalences. As a result, we expect interesting identities between  autoequivalences for a general \CY\ compactification.

The next question is how often do these identities follow from the technique of exceptional collections developed here. Obviously, we cannot expect to have an exceptional collection for every instance. In this paper we looked at three examples with two dimensional moduli space, two local and one compact, and proved a total of nine identities. Every one of them followed from the existence of an exceptional collection (either on a divisor, or in the ambient space). The author also studied the compact models
$\P^4_{1,1,2,2,2}$ and  $\P^4_{1,1,2,8,12}$, with two resp. three dimensional Kahler moduli space,
and obtained similar results.

On the other hand, most known \CY\ varieties are subvarieties in toric varieties (hypersurfaces and complete intersections). Kawamata  proves (Theorem 1.1. of \cite{Kawamata:DC}) that if $X$ is a projective toric variety with at most quotient
singularities, then the associated smooth Deligne-Mumford stack $\c X$ has a complete exceptional collection consisting
of sheaves. This suggests that for large class of  \CY\ varieties at least part of the monodromy identities should indeed follow from the exceptional collection techniques. It would be interesting to study examples where the ambient space does not have an exceptional collection.

Unfortunately general statements are beyond reach at this point, even in the toric case. The first obstacle is that in order to write down  the identities we need a detailed understanding of the singularities of the moduli space, the discriminant loci, and its intersections with the large radius divisors. All of these are hard to get, but  computing the discriminant in the general case seems impossible. There are two ways to approach the problem:
\begin{enumerate}
 \item write down the equations enforcing that the mirror is singular, and use elimination theory.
\item use  Horn parametrization, and then elimination theory (see, e.g., \cite{en:fracC2}).
\end{enumerate}
Using Groebner basis for the  elimination part, both approaches give the same answer, but as the dimension of the moduli space increases, today's computers are unable to solve the elimination problem. Even if we knew the discriminant, its intersections with the large radius divisors is very diverse  (we already saw evidence in our examples), hence writing down general statements seems impossible. Therefore, the best one can do is to prove the identities case by case, as we have done in this paper. On the other hand, this is not an unsatisfactory state of affairs, since this is the best one can do for the proof of mirror symmetry as well: prove it case by case.

One family of examples where we can make general statements is the case of hypersurfaces in weighted projective spaces. Physics-wise the relevant case is the 3-fold, and it is meaningless to work on 6-folds and higher, but mathematically the statement holds for all weights and all dimensions. The weighted projective space has only quotient singularities, and also has a strong and full exceptional collection when viewed as a stack. As discussed in the introduction, the relevant identity was proven in full generality in \cite{en:Alberto}.

There are two future directions that our technique seems suitable to tackle. The first is to extend our results to complete intersections in toric varieties. The second is the connection with Horja's EZ-transformations \cite{Horj:EZ}. The conjectured autoequivalences were proved only in the case when the \CY\ is a fibration over the projective space of dimension $d$, $\P^d$ \cite{en:Horja}. The proof used the so-called ``Verdier 9-diagram'', and had no mention of exceptional collections. Canonaco  \cite{AC} pointed out  that the same result also follows from the exceptional collection approach. This opens up the possibility of proving other cases of Horja's EZ-conjecture.

\section*{Acknowledgments}

It is a pleasure to thank Paul Aspinwall, Tom Bridgeland, Mike Douglas, Alastair King and Eric Sharpe for useful conversations. I am especially indebted to Alberto Canonaco for collaboration on a related project. This work was partially supported by NSF grant PHY-0755614.

\vspace{1cm}

\end{document}